\let\oldstepref\stepref
\renewcommand{\stepref}[1]{step \oldstepref{#1}}
\crefname{algocf}{algo.}{algo(s)}
\newcommand\nc\newcommand
\nc{\bb}[1]{\mathbb{#1}}
\renewcommand{\cal}[1]{\mathcal{#1}}
\nc{\mbf}[1]{\mathbf{#1}}
\nc\defeq{\triangleq}
\nc\Defeq[1]{\stackrel{\normalfont\mbox{#1}}{=} }
\newtheorem{theorem}{Theorem}
\crefname{definition}{defn.}{defns}
\newtheorem{lemma}[theorem]{Lemma}
\newtheorem*{lemma*}{\indent Lemma}
\newtheorem*{remark}{\indent Remark}
\DeclarePairedDelimiter{\set}{\lbrace}{\rbrace}
\DeclarePairedDelimiter{\br}{\lparen}{\rparen}
\DeclarePairedDelimiter{\abs}{\lvert}{\rvert}
\nc\bfa{{\bf{a}}}\nc\bfA{{\boldsymbol A}}\nc\cA{{\cal A}} \nc\fA[1]{A\br*{#1}} \nc\fa[1]{a\br*{#1}}  \nc\rmA{\mathrm{A}} \nc\rma{\mathrm{a}}
\nc\bfb{{\bf{b}}}\nc\bfB{{\boldsymbol B}}\nc\cB{{\cal B}} \nc\fB[1]{B\br*{#1}} \nc\fb[1]{b\br*{#1}}  \nc\rmB{\mathrm{B}} \nc\rmb{\mathrm{b}}
\nc\bfc{{\bf{c}}}\nc\bfC{{\boldsymbol C}}\nc\cC{{\cal C}} \nc\fC[1]{C\br*{#1}} \nc\fc[1]{c\br*{#1}}  \nc\rmC{\mathrm{C}} \nc\rmc{\mathrm{c}}
\nc\bfd{{\bf{d}}}\nc\bfD{{\boldsymbol D}}\nc\cD{{\cal D}} \nc\fD[1]{D\br*{#1}} \nc\fd[1]{d\br*{#1}}  \nc\rmD{\mathrm{D}} \nc\rmd{\mathrm{d}}
\nc\bfe{{\bf{e}}}\nc\bfE{{\boldsymbol E}}\nc\cE{{\cal E}} \nc\fE[1]{E\br*{#1}} \nc\fe[1]{e\br*{#1}}  \nc\rmE{\mathrm{E}} \nc\rme{\mathrm{e}}
\nc\bff{{\bf{f}}}\nc\bfF{{\boldsymbol F}}\nc\cF{{\cal F}} \nc\fF[1]{F\br*{#1}} \nc\ff[1]{f\br*{#1}}  \nc\rmF{\mathrm{F}} \nc\rmf{\mathrm{f}}
\nc\bfg{{\bf{g}}}\nc\bfG{{\boldsymbol G}}\nc\cG{{\cal G}} \nc\fG[1]{G\br*{#1}} \nc\fg[1]{g\br*{#1}}  \nc\rmG{\mathrm{G}} \nc\rmg{\mathrm{g}}
\nc\bfh{{\bf{h}}}\nc\bfH{{\boldsymbol H}}\nc\cH{{\cal H}} \nc\fH[1]{H\br*{#1}} \nc\fh[1]{h\br*{#1}}  \nc\rmH{\mathrm{H}} \nc\rmh{\mathrm{h}}
\nc\bfi{{\bf{i}}}\nc\bfI{{\boldsymbol I}}\nc\cI{{\cal I}} \nc\fI[1]{I\br*{#1}} \nc\rmI{\mathrm{I}} \nc\rmi{\mathrm{i}}
\nc\bfj{{\bf{j}}}\nc\bfJ{{\boldsymbol J}}\nc\cJ{{\cal J}} \nc\fJ[1]{J\br*{#1}} \nc\fj[1]{j\br*{#1}} \nc\rmJ{\mathrm{J}} \nc\rmj{\mathrm{j}}
\nc\bfk{{\bf{k}}}\nc\bfK{{\boldsymbol K}}\nc\cK{{\cal K}} \nc\fK[1]{K\br*{#1}} \nc\fk[1]{k\br*{#1}} \nc\rmK{\mathrm{K}} \nc\rmk{\mathrm{k}}
\nc\bfl{{\bf{l}}}\nc\bfL{{\boldsymbol L}}\nc\cL{{\cal L}} \nc\fL[1]{L\br*{#1}} \nc\fl[1]{l\br*{#1}} \nc\rmL{\mathrm{L}} \nc\rml{\mathrm{l}}
\nc\bfm{{\bf{m}}}\nc\bfM{{\boldsymbol M}}\nc\cM{{\cal M}} \nc\fM[1]{M\br*{#1}} \nc\fm[1]{m\br*{#1}} \nc\rmM{\mathrm{M}} \nc\rmm{\mathrm{m}}
\nc\bfn{{\bf{n}}}\nc\bfN{{\boldsymbol N}}\nc\cN{{\cal N}} \nc\fN[1]{N\br*{#1}} \nc\fn[1]{n\br*{#1}} \nc\rmN{\mathrm{N}} \nc\rmn{\mathrm{n}}
\nc\bfo{{\bf{o}}}\nc\bfO{{\boldsymbol O}}\nc\cO{{\cal O}} \nc\fO[1]{O\br*{#1}} \nc\fo[1]{o\br*{#1}} \nc\rmO{\mathrm{O}} \nc\rmo{\mathrm{o}}
\nc\bfp{{\bf{p}}}\nc\bfP{{\boldsymbol P}}\nc\cP{{\cal P}} \nc\fP[1]{P\br*{#1}} \nc\fp[1]{p\br*{#1}} \nc\rmP{\mathrm{P}} \nc\rmp{\mathrm{p}}
\nc\bfq{{\bf{q}}}\nc\bfQ{{\boldsymbol Q}}\nc\cQ{{\cal Q}} \nc\fQ[1]{Q\br*{#1}} \nc\fq[1]{q\br*{#1}} \nc\rmQ{\mathrm{Q}} \nc\rmq{\mathrm{q}}
\nc\bfr{{\bf{r}}}\nc\bfR{{\boldsymbol R}}\nc\cR{{\cal R}} \nc\fR[1]{R\br*{#1}} \nc\fr[1]{r\br*{#1}} \nc\rmR{\mathrm{R}} \nc\rmr{\mathrm{r}}
\nc\bfs{{\bf{s}}}\nc\bfS{{\boldsymbol S}}\nc\cS{{\cal S}} \nc\fS[1]{S\br*{#1}} \nc\fs[1]{s\br*{#1}} \nc\rmS{\mathrm{S}} \nc\rms{\mathrm{s}}
\nc\bft{{\bf{t}}}\nc\bfT{{\boldsymbol T}}\nc\cT{{\cal T}} \nc\fT[1]{T\br*{#1}} \nc\ft[1]{t\br*{#1}} \nc\rmT{\mathrm{T}} \nc\rmt{\mathrm{t}}
\nc\bfu{{\bf{u}}}\nc\bfU{{\boldsymbol U}}\nc\cU{{\cal U}} \nc\fU[1]{U\br*{#1}} \nc\fu[1]{u\br*{#1}} \nc\rmU{\mathrm{U}} \nc\rmu{\mathrm{u}}
\nc\bfv{{\bf{v}}}\nc\bfV{{\boldsymbol V}}\nc\cV{{\cal V}} \nc\fV[1]{V\br*{#1}} \nc\fv[1]{v\br*{#1}} \nc\rmV{\mathrm{V}} \nc\rmv{\mathrm{v}}
\nc\bfw{{\bf{w}}}\nc\bfW{{\boldsymbol W}}\nc\cW{{\cal W}} \nc\fW[1]{W\br*{#1}} \nc\fw[1]{w\br*{#1}} \nc\rmW{\mathrm{W}} \nc\rmw{\mathrm{w}}
\nc\bfx{{\bf{x}}}\nc\bfX{{\boldsymbol X}}\nc\cX{{\cal X}} \nc\fX[1]{X\br*{#1}} \nc\fx[1]{x\br*{#1}} \nc\rmX{\mathrm{X}} \nc\rmx{\mathrm{x}}
\nc\bfy{{\bf{y}}}\nc\bfY{{\boldsymbol Y}}\nc\cY{{\cal Y}} \nc\fY[1]{Y\br*{#1}} \nc\fy[1]{y\br*{#1}} \nc\rmY{\mathrm{Y}} \nc\rmy{\mathrm{y}}
\nc\bfz{{\bf{z}}}\nc\bfZ{{\boldsymbol Z}}\nc\cZ{{\cal Z}} \nc\fZ[1]{Z\br*{#1}} \nc\fz[1]{z\br*{#1}} \nc\rmZ{\mathrm{Z}} \nc\rmz{\mathrm{z}}
\DeclareMathOperator{\supp}{supp}
\DeclareMathOperator{\rank}{rank}
\DeclareMathOperator{\prob}{Pr}
\DeclareMathOperator{\bern}{Bernoulli}
\nc{\NormO}[1]{||#1||_{\ell_0}}
\nc{\Prob}[1]{\prob\br*{#1}}
\nc{\Exp}[1]{\exp\br*{#1}}
\nc{\Supp}[1]{\supp\br*{#1}}
\nc{\Log}[1]{\log\br*{#1}}
\nc{\Rank}[1]{\rank\br*{#1}}
\nc{\Min}[1]{\min\br*{#1}}
\nc\real{{\mathbb R}}
\nc\complex{{\mathbb C}}
\nc\F{{\mathbb F}}
\nc\N{{\mathbb N}}
\nc\bK{{\mathbb K}}
\nc\integers{{\mathbb Z}}
\nc\rationals{{\mathbb Q}}
\begin{document}
\title{Novel Impossibility Results for Group-Testing}
\author{Abhishek Agarwal \thanks{University of Minnesota, Twin Cities. Email: abhiag@umn.edu}\and
Sidharth Jaggi\thanks{Chinese University of Hong Kong. Email: jaggi@ie.cuhk.edu.hk} \and
Arya Mazumdar \thanks{University of Massachusetts, Amherst. Email: arya@cs.umass.edu}
}
\date{\vspace{-5ex}}
\maketitle

\begin{abstract}	
In this work we prove non-trivial impossibility results for perhaps the simplest non-linear estimation problem, that of {\it Group Testing} (GT), via the recently developed Madiman-Tetali inequalities. Group Testing concerns itself with identifying a hidden set of $d$ {defective} items from a set of $n$ items via $t$ {disjunctive/pooled} measurements (``group tests"). We consider the { linear sparsity regime}, i.e. $d = \delta n$  for any constant $\delta >0$, a hitherto little-explored (though natural) regime. In a standard information-theoretic setting, where the tests are required to be {non-adaptive} and a small probability of reconstruction error is allowed, our lower bounds on $t$ are the {\it first} that improve over the classical counting lower bound, $t/n \geq H(\delta)$, where $H(\cdot)$ is the binary entropy function. As corollaries of our result, we show that (i) for $\delta \gtrsim 0.347$, individual testing is essentially optimal, i.e., $t \geq n(1-o(1))$; and (ii) there is an {adaptivity gap}, since for $\delta \in (0.3471,0.3819)$  known {adaptive} GT algorithms require fewer than $n$ tests to reconstruct $\cD$, whereas our bounds imply that the best nonadaptive algorithm must essentially be individual testing of each element. Perhaps most importantly, our work provides a framework for combining combinatorial and information-theoretic methods for deriving non-trivial lower bounds for a variety of non-linear estimation problems.
\end{abstract}

\section{Introduction}

\begin{figure}
\centering
\begin{center}
\includegraphics[height=7.5cm]{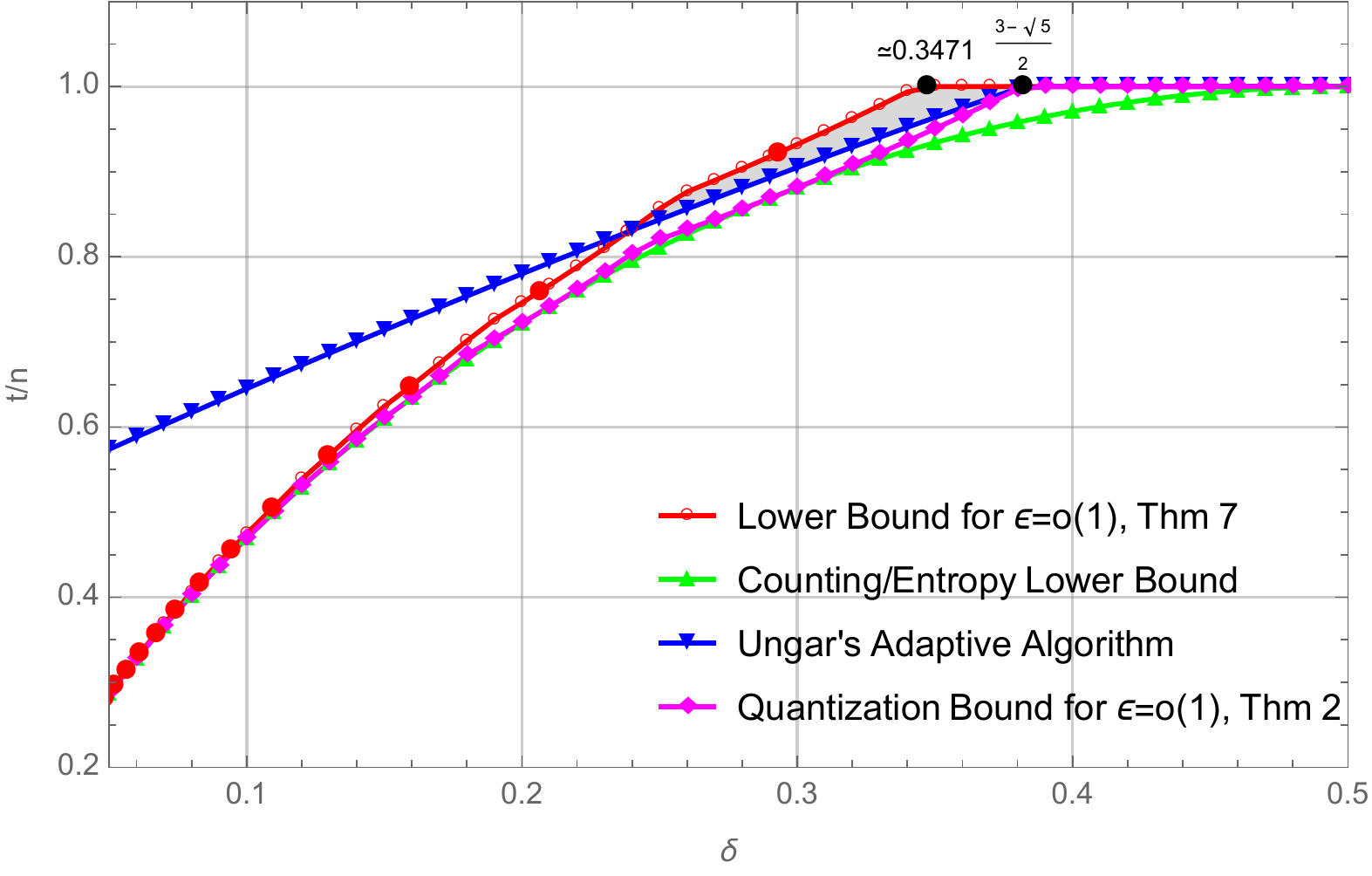} \\
\caption{Bounds on $t/n$ vs.  $\delta$ for $\epsilon = o(1)$.  The lower bound implied by \cref{thm:c_gt_1_2} corresponds to the horizontal part of the magenta curve,  and the result implied by \cref{thm:quantization_Gap} corresponds to the remainder of the magenta curve (the ``Quantization bound"). Both of these are superseded by the more sophisticated (and harder to prove) lower bound in \cref{thm:general_nagt}, plotted via the red curve. The shaded region (above the blue curve and below the red curve) denotes where there is an ``adaptivity gap'' -- the lower bound for (vanishing-error) NAGT exceeds the rate achievable by (zero-error) AGT~\cite{ungar1960cutoff}. }
\label{fig:plot}
\end{center}
\end{figure}

Estimation/inverse problems are the bread and butter of engineering -- given a system with a known input-output relationship, an observed output, and statistics on the input, the goal is to infer the input. While much is known about {\it linear} estimation problems and their fundamental limits \cite{kailath2000linear, reeves2013approximate}, understandably characterizing the fundamental limits of {\it non-linear} estimation problems are considerably more challenging.

Arguably one of the ``simplest" non-linear estimation problems is that of Group Testing (GT).  It is assumed that hidden among a set of $n$ items is a special set ${\cD}$ of $d$ {\it defective} items.\footnote{It is typically assumed that the value of $d$, or a good upper bound on it, is known {\it a priori}. This is because it can be shown that PAC-learning the value of $d$ is ``cheap" in terms of the number of group tests required \cite{damaschke2010bounds}.} The classical problem as posed by Dorfman \cite{dorfman1943detection}, requires one to exactly estimate ${\cD}$ via disjunctive measurements (``group tests") on ``pools" of items. That is, the output of each test is positive if the pool contains at least one  item from ${\cD}$, and negative otherwise. Besides its intrinsic appeal as a fundamental estimation problem, group-testing and its generalizations have a variety of diverse applications, such as bioinformatics \cite{ngo2000survey}, wireless communications \cite{wolf1985born,berger1984random}, and pattern finding \cite{macula2004group}.

Group testing problems come in a variety of flavours. In particular: 

\begin{enumerate}

\item {\bf (Non)-Adaptivity:} The testing algorithm can be {\it adaptive} (tests may be designed depending on previous test outcomes) or {\it non-adaptive} (tests must be designed non-adaptively, allowing for parallel testing/standardized hardware). 

\item {\bf Reconstruction error:} The reconstruction algorithm might need to be {\it zero-error} (always output the correct answer), or {\it vanishing error} (the probability of error goes to zero asymptotically in $n$), or an $\epsilon$ probability of  error ($\epsilon$-error) may be allowed. \footnote{Note that the error here is in the decoder, {\it not} in the test outcomes. There is considerable other literature (e.g.~\cite{chan2014non}) for the scenario when the test outcomes themselves may be noisy, for instance due to faulty hardware.}

\item {\bf Statistics of {$\cD$}:} Different works consider different statistical models for {$\cD$}. In {\it Combinatorial Group Testing} (CGT), it is assumed that any set of $d$ items may be defective, whereas in {\it Probabilistic Group Testing} (PGT), items are assumed to be i.i.d. defective with probability $d/n$. 

\item {\bf Sparsity regime:} Finally, it turns out that the specific {\it sparsity regime} matters - the regime where $d$ scales sub linearly in $n$ has seen much work, whereas the {\it linear sparsity regime} ($d=\delta n$ for some constant $\delta$) is relatively little explored.

\end{enumerate}

In this work we focus on non-adaptive group-testing with $\epsilon$-error in the linear sparsity regime -- indeed, this is perhaps the most ``natural" version of the problem, especially when viewed through an information-theoretic lens (for instance, the most investigated/used versions of channel codes are: non-adaptive since the encoder does not get to see the decoder's input; allow for reconstruction error; and typically have constant rate and hence are in the linear regime). Nonetheless, to put our own results in context we first briefly reprise the literature for other flavors of the problem in \cref{table:Results}. Note that, with a slight abuse of notation, we denote by $H(X)$ the entropy of the random variable/vector $X$, as well as the binary entropy function $H(x) = -x \log x -(1-x)\log (1-x)$. This should be clear from the argument of the function. 

In particular, let us briefly discuss the existing results of $\epsilon$-error nonadaptive group testing problem, the focus of this paper. It is quite straightforward to come up with a converse result based on counting/Fano's inequality (for example, see \cite{chan2014non}) that says $t \ge (1-\epsilon) \log \binom{n}{d}$. In \cite{aldridge2016improved}, it has been shown that this bound is also  tight for small $\epsilon$, as long as $d = O(n^{1/3})$ by showing randomized achievability schemes. Probabilistic existence of achievability schemes in this regime has also been derived, including for more general settings, in \cite{zhigljavsky2003probabilistic} (see Theorem~5.5 therein).  If we are allowed to sacrifice a constant factor in the number of tests, then we can have explicit deterministic construction of such achievability schemes \cite{mazumdar2016nonadaptive}. 
It is to be noted that, there is a surprising lack of study in the regime where the number of defectives varies linearly with the number of elements, i.e., $d = \delta n$. The counting converse bound simply boils down to 
$t\ge nH(\delta).$ This implies that individual testing of items is optimal when $\delta >0.5$. There is no other nontrivial converse bound that exists for the linear regime. In this paper we aim to close this gap. 
On the other hand, a recent work by Wadayama \cite{wadayama2017nonadaptive}, provides an achievability scheme in this regime based on sparse-graph codes (and density-evolution analysis). For certain values of $\delta$ (for example $\delta = 1-\frac1{2^{1/6}}$), this achievability scheme is in direct contradiction with our impossibility result in \cref{thm:general_nagt}. 


It also is worth pointing out that the linear-sparsity regime is well-studied for adaptive group testing starting as early as in the sixties.\footnote{The authors would like to thank Matthew Aldridge for drawing our attention to this part of the literature.} It has been shown that under a zero probability of error metric, for $\delta > \frac{3-\sqrt{5}}{2}$ individual testing is the optimal strategy \cite{ungar1960cutoff}.~\footnote{In the literature there is also a conjecture~\cite{E} that if one demands that the worst-case number of tests (rather than average number of tests) be less than $n$, then under a zero probability of error metric no value of $\delta >1/3$ can be tolerated.} On the other hand, a rather simple adaptive algorithm achieves an expected number of tests equaling at most $0.5(3-(1-\delta)-(1-\delta)^2)n$ and identifies all defectives \cite{ungar1960cutoff}\footnote{~\cite{ungar1960cutoff} actually ascribes this algorithm to folklore -- we have been unable to find an earlier reference to this result.} -- we reprise this algorithm for completeness in \cref{sec:FKW_algo}. This is interesting if we contrast this with our converse result. There is a regime of values of $\delta$ (roughly in the range $\delta \in (0.3471,0.3819)$), where zero-error adaptive algorithms on average require fewer than $n$ tests to reconstruct ${\cal D}$, however our bounds imply that the best nonadaptive algorithm (even with vanishing error) turns out to essentially be individual testing of each element. 


\begin{table}
\centering
\begin{adjustbox}{max width=\textwidth}
\begin{threeparttable}[c]
\begin{tabular}{|c|c|c|c|c|c|c|}
\hline
 & \multicolumn{2}{c|}{Adaptive} & \multicolumn{4}{c|}{Non-Adaptive} \\ \hline
 & \multicolumn{2}{c|}{Zero-error$^{ *1}$} & \multicolumn{2}{c|}{Zero-error} & \multicolumn{2}{c|}{Vanishing Error / $\epsilon$-error} \\ \hline
 & sub-linear  & \begin{tabular}{cc}
 	linear \\ $d = \delta n$
 \end{tabular} & \begin{tabular}{cc}
 	sub-linear \\ $d=n^\beta$
 \end{tabular} & \begin{tabular}{cc}
 	linear \\ $d = \delta n$
 \end{tabular} & \begin{tabular}{cc}
 	sub-linear \\ $d=n^\beta$
 \end{tabular}& \begin{tabular}{cc}
 	linear \\ $d = \delta n$
 \end{tabular} \\ \hline
 & &  &  &  &  & \\
Achievability & $d\log(n/d) +O(d)$ \cite{A} & $0.5(3-(1-\delta)-(1-\delta)^2)n$  \cite{ungar1960cutoff} & $d^2 \log n$ \cite{hwang1987non} & $n$ & $c(\beta)(1+\epsilon)(1-\beta)d\log(n)$, \cite{aldridge2016improved}$^{ *4}$ & See discussion of~\cite{wadayama2017nonadaptive} \\ 
 & &  &  &  &  & \\ \hline
 & &  &  &  &  & \\
Converse & $d\log(n/d)$ (folklore) & 
\begin{tabular}{c} 
$nH(\delta)$ \cite{chan2014non}$^{ *2}$,\\
$n-1$ if $\delta > \log_3(1.5) \approx 0.3691$, (Max. \# tests) \cite{riccio2000sharper} $^{ *3}$,\\
$n$ if $\delta > \frac{3-\sqrt{5}}{2} \approx 0.3819$, (Avg. \# tests)\cite{ungar1960cutoff} $^{ *3}$
\end{tabular} 
& \begin{tabular}{c}$\frac{d^2}{2\log d} \log n$ \cite{d1982bounds}\\ $n$ if $\beta\geq1/2$ \cite[Thm 7.2.9]{du2000combinatorial}\end{tabular}& $n$ \cite[Thm 7.2.9]{du2000combinatorial} & $(1-\epsilon)(1-\beta)d\log(n)$ \cite{chan2014non} & 
\begin{tabular}{ll} 
$n(1-\epsilon)$  & $\delta \gtrsim 0.3471$ \\
$g_{\delta}^{-1}(H(\delta)-\epsilon)$ & $\delta \lesssim 0.3471$ \\
(see \cref{thm:general_nagt})
\end{tabular}
\\ 
& &  &  &  &  & \\ \hline
\end{tabular}
    \begin{tablenotes}
      \small
      \item *1 Adaptive algorithms with reconstruction error have not really been considered much in the literature. Most proposed algorithms naturally result in zero-error, and the only known converses that are tighter than the counting bound intrinsically rely on the zero-error nature of the problem.
      \item *2 This bound holds even for $\epsilon$-error.
      \item *3 It is known \cite{ungar1960cutoff} that $\frac{3-\sqrt{5}}2 \approx 0.382$ is the correct cutoff point for adaptive PGT, whereas it is conjectured that the cutoff point for adaptive CGT is 1/3 \cite{E}
      \item *4 $c(\beta) \defeq 
      \begin{cases} 
		 = 1 & \beta<1/3 \\
		>1 & \beta \in (1/3,1)
   \end{cases}
   $
    \end{tablenotes}
\caption{A comparison of known inner and outer bounds on the number of tests required in a variety of group-testing settings. See also \cref{fig:plot}.}
\label{table:Results}
\end{threeparttable}
\end{adjustbox}
\end{table}

\subsection{Our Contributions and Techniques}

The canonical method (variously called the information-theoretic bound, or the counting bound) for proving impossibility results for group-testing problems via information-theoretic methods is quite robust to model perturbations: it works for adaptive and non-adaptive algorithms, zero-error and vanishing error reconstruction error criteria, PGT and CGT, and sub-linear and linear regimes. This method (see the Appendix in \cite{chan2014non} for an example) generally proceeds as follows: 

\begin{enumerate}

\item {\bf Entropy bound on input:} One first bounds the entropy $H(\bfX_{[n]})$ of the $n$-length binary vector $\bfX_{[n]}$ describing the status of the $n$ items (this means, the entry corresponding to an element in $\bfX_{[n]}$ is $1$ if and only if the element is defective): this quantity equals $\log\br{\binom{n}{d}}$ in the CGT case, and $nH(d/n)$ in the PGT case\footnote{One can see directly via Stirling's approximation that for large $n$ these two quantities are equal, up to lower-order terms.}; then

\item {\bf Information (in)equalities/Fano's inequality:} One uses standard information equalities, the data-processing inequality, the chain-rule, and Fano's inequality to argue that any group-testing scheme must satisfy the inequality $H(\bfY_{[t]}) \geq H(\bfX_{[n]})-n\epsilon$ (here $\bfY_{[t]}$ is a binary vector describing the set of $t$ test outcomes (that means an entry in $\bfY_{[t]}$ is $1$ if and only if the corresponding test result is positive), and $\epsilon$ is a lower bound on the probability of error of the group-testing scheme); and then

\item {\bf Independence bound.} Since $\bfY_{[t]}$ is a binary vector, one uses the independence bound to argue that $H(\bfY_{[t]}) \leq t$, and thereby obtains a lower bound on the required number of tests $t$, as a function of $\epsilon$, and $H(\bfX_{[n]})$. 

\end{enumerate}

Perhaps surprisingly, even for such a non-linear problem as group-testing, for a variety of group-testing flavors (such as non-adaptive GT with vanishing error when  $d={\cal O}(n^{1/3})$~\cite{aldridge2016improved}) such a straightforward approach results in an essentially tight lower bound on the number of tests required. The key contribution of our work is to provide a tightening of the method above for the regimes where it is not known to be tight.

While we believe our generalization technique is also fairly robust to various perturbations of the group-testing model, we focus in this work on the problem of $\epsilon$-error non-adaptive PGT\footnote{As noted in the Remark at the end of \cref{sec:general_GT_matrix}, almost all the techniques in this paper go through even for CGT -- we highlight the current technical bottleneck there as well.} in the linear sparsity regime. Possibly our key insight is that for this problem variant is that step (iii) of the counting bound may be quite loose.

Specifically, we present three novel converse bounds in \cref{thm:c_gt_1_2,thm:quantization_Gap,thm:general_nagt} for the general non-adaptive PGT problem in the linear regime. The result in \cref{thm:c_gt_1_2} follows from the observation that, for $\delta \geq \frac{3-\sqrt5}2$ the individual test entropies are maximized when each test contains exactly one object. Another simple result, for $\delta\leq \frac{3-\sqrt5}2$, in \cref{thm:quantization_Gap} follows from the observation that the individual test entropy, satisfies $ H(Y_l) \lneq 1$ for most of the region $\delta \in (0,1)$ because of the constraint that each test must contain an integer number of objects. 

Our main result (tighter than either \cref{thm:c_gt_1_2} or \cref{thm:quantization_Gap}, but also significantly more challenging to prove) in \cref{thm:general_nagt} exploits the observation that the tests in the  Non Adaptive Group Testing (NAGT) problem must have elements in common. For the linear regime, this observation leads to significant mutual information between the tests when the number of objects in the tests do not scale with $n$. Hence, we can exploit this mutual information to tighten the upper bound on the joint entropy $ H(\bfY_{[t]}) $ in step (iii) above. \Cref{fig:plot} plots our results in the linear regime along with existing results in the literature.

To bound the joint entropy $H(\bfY_{[t]})$ in step (iii), we must look for information inequalities that upper bound the joint entropies of correlated random variables. 
While the fascinating polymatroidal properties of such joint entropies ({\it Shannon-type inequalities}) explored  by Zhang and Yeung~\cite{zhang1998characterization}, as well as the {\it non-Shannon-type inequalities} that were subsequently found~\cite{zhang1997non} and are not consequences of such polymatroidal properties, are in this direction, they are perhaps {\it too} general to offer much guidance as to which specific information-inequalities might prove useful for providing non-trivial lower bounds for NAGT. 
A more structured characterization in this direction is Han's inequality~\cite{Han78} (implied by Shannon-type inequalities), that says
$$
H(\bfY_{[t]}) \le \frac{1}{t-1}\sum_{i=1}^t H(\bfY_{[t]\setminus\{i\}}),
$$
where $\bfY_{[t]\setminus\{i\}}$ contains test results except for the $i$th test. 

In this paper we use a significant generalization of Han's inequality to an {\it asymmetric} setting due to Madiman and Tetali  \cite{madimanTetali09}, that seems well-suited to analyzing the combinatorial structures naturally arising in NAGT.
Consider the NAGT matrix $M\in \set{0,1}^{t\times n}$, whose $(i,j)$th element is 1 if and only if the $i$th test includes the $j$th element. Let $\bfY_S$ denote the binary random variables corresponding to the test outcomes for $S\subseteq [t]$ and let $\bfX_{S}$ denote the indicator random variables corresponding to the objects for $S \subseteq [n]$. To demonstrate that non-trivial correlation between at least some sets of tests that must exist in our setting, we use the Madiman-Tetali inequalities \cite{madimanTetali09},
\begin{equation}\label{MT_conditional_form}
	H(\bfY_{[t]}) \leq \sum_{S \in \cC} \alpha(S) H(\bfY_{S} | \bfY_{S_-})
\end{equation}
where $S_- \defeq \set{i\in [t] : i \lneq j, \forall j\in S}$ and $\cC \subseteq 2^{[t]}$. The coefficients $\alpha(S)$ and the set $\cC$ form a cover of $2^{[t]}$ (more detail on this will be given in \cref{sec:impossibility_results}). In \cref{thm:general_nagt} we use the {\it weak form} \cref{MT} of the inequality above -- see \cref{sec:fut_work} for a discussion of the {\it strong form} and its potential use.

We use a two-step procedure to bound the joint entropy. In the first step, we assume that all the rows of the matrix $M$ has same weight (i.e., all tests contain the same number of elements, \cref{sec:constant_row_wt}). The results are then extended to general group testing matrices by considering them as a union of tests of (differing) constant weights. The final result is summarized in \cref{thm:general_nagt}.

%

In the rest of the paper, we first describe our converse results \cref{sec:impossibility_results}, followed by a comparison with earlier bounds \cref{comparison} and  future directions of this project.

\section{Impossibility Results for Nonadaptive Group Testing} 
\label{sec:impossibility_results}
\subsection{Notation and Model}\label{sec:model}

For integers $a,b$ let $[a,b]\defeq \set{a,a+1,\ldots, b}$ and $[b]\defeq [1,b]$. Let $\log(.)$ denote the logarithm to the base 2, unless otherwise stated.

Consider the PGT problem with $n$ objects. Assume that we can tolerate a error probability $\epsilon$ in the decoding. Denote the indicator random variable which corresponds to object $i\in [n]$ being defective by $X_i$. Then $X_i$ are iid $\bern(\delta)$. With a slight abuse of notation, we use $X_i$ to refer to the random variable {\em and} the object $i$ interchangeably, when there is no scope of confusion.

Let $M \in \set{0,1}^{t\times n}$ denote the fixed GT matrix with $t$ tests. Denote the random variable corresponding to the outcome of the test in row $l$ by $Y_l$ and let $\bfY_{S} \defeq \set{Y_l}_{l\in S}$ for $S\subset [t]$. For an object set $R\subseteq [n]$, let $Y(R)$ denote the random variable corresponding to the test with object set $R$. For a class of object sets $\cR \subseteq 2^{[n]}$, let $\bfY(\cR)$ denote the random vector corresponding to the test with object sets $R \in \cR$.

Let $R_l\subseteq [n], l\in [t]$ denote the set of objects included in test $Y_l$ and let $S_i \subseteq [t], i\in [n]$ denote the tests containing the object $i$. Let $\cR(S), S\subseteq [t]$ denote the class of subsets of $[n]$ corresponding to the object sets of the tests $\set{Y_l}_{l\in S}$ ie. $\cR\br{S} \defeq \set{R_l}_{l\in S}$. For a class of sets $\cX \subseteq 2^\Omega$, and $A\in 2^\Omega$ define $\cX-A \defeq \set{Q\setminus A : Q\in \cX }$ as the class with set $A$ removed from all subsets in $\cX$ and let $\displaystyle [\cX] \defeq \bigcup_{A\in \cX} A$.

\subsection{Simple Converse Bounds}

Recall that in the linear sparsity regime each element is defective with probability $d/n = \delta$.
The canonical counting bound for the Group Testing problem gives the following upper bound on the number of tests for the $\epsilon$-error case:
\begin{equation}\label{counting_bound}
	t/n \geq H(\delta) - \epsilon
\end{equation}
This method uses the independence bound to get an upper bound on the joint entropy of the tests, \cref{independence_bound}, and then uses Fano's inequality, \cref{Fano}, to get a a lower bound on $t$.
\begin{equation}\label{independence_bound}
	H(\bfY_{[t]}) \leq t 
\end{equation}
\begin{equation}\label{Fano}
	\underbrace{H(\bfX_{[n]})}_{= n H(\delta)}  \leq H(Y_{[t]}) + \epsilon n.
\end{equation}

We tighten  \cref{counting_bound} by improving the bound in \cref{independence_bound} for the non-adaptive PGT problem in the linear regime. We do this by exploiting the fact the in the NAGT problem there would be a significant fraction of tests that have elements in common. Intuitively, we would want to maximize the entropy of the individual tests $\set{Y_l}_{l\in{[t]}}$ by choosing $\abs{R_l}$ such that $ H(Y_l) = H((1-\delta)^{\abs{R_l}}) = 1 $ i.e.  $ R_l \approx k_0(\delta)$ for $ l\in {[t]} $ where
\begin{equation}\label{k_0}
	k_0(\delta) \defeq \frac{\log\br{1/2}}{\log(1-\delta)}
\end{equation}
This implies that all tests contain a constant (with respect to $n$) number of objects. When any set $ S\subseteq [t] $ of such tests $ \bfY_S $ have an object in common, we can bound their joint entropy away from $ \abs{S} $. We exploit this fact to bound the joint entropy $ H(\bfY_{[t]}) $  away from $t$. But first, we exploit the nature of the group tests to improve \cref{counting_bound}. 
\begin{theorem}\label{thm:c_gt_1_2}
	For the PGT problem, we need at least $n (1-\epsilon/H(\delta))$ tests to identify the defective set with error probability $\epsilon$ for $\delta \geq \delta^\star$
	where $$\delta^\star \defeq \frac{3-\sqrt{5}}2$$.
\end{theorem}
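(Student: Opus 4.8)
The plan is to retain the first two steps of the counting-bound recipe (the entropy bound on the input, $H(\bfX_{[n]}) = nH(\delta)$, and Fano's inequality \cref{Fano}) and only sharpen the third step: instead of the crude independence bound \cref{independence_bound}, I would use subadditivity of entropy, $H(\bfY_{[t]}) \le \sum_{l\in[t]} H(Y_l)$, together with a pointwise bound on each individual test entropy. In the PGT model, test $l$ (with object set $R_l$) is negative exactly when all $\abs{R_l}$ of its objects are non-defective, so $Y_l\sim\bern\br*{1-(1-\delta)^{\abs{R_l}}}$ and hence $H(Y_l) = H\br*{(1-\delta)^{\abs{R_l}}}$ (with the degenerate case $\abs{R_l}=0$ giving $H(Y_l)=0$). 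The whole theorem then reduces to the elementary claim that, for $\delta\ge\delta^\star$,
\begin{equation*}
	H\br*{(1-\delta)^k}\le H(\delta)\qquad\text{for every integer }k\ge1 .
\end{equation*}
Granting the claim, $\sum_{l\in[t]}H(Y_l)\le tH(\delta)$, so combining with \cref{Fano} gives $tH(\delta)\ge H(\bfX_{[n]})-\epsilon n = n\br*{H(\delta)-\epsilon}$, which rearranges to $t\ge n\br*{1-\epsilon/H(\delta)}$, as desired.

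To prove the claim, set $p\defeq 1-\delta$ and note that the hypothesis $\delta\ge\delta^\star=\frac{3-\sqrt5}{2}$ is exactly $p\le\frac{\sqrt5-1}{2}$, the golden-ratio conjugate. I would split on where the powers $p^k$ fall relative to $1/2$, using that the binary entropy $H$ is increasing on $[0,1/2]$, decreasing on $[1/2,1]$, and symmetric ($H(q)=H(1-q)$). If $p\le 1/2$ then $p^k\le p\le 1/2$ for all $k\ge1$, and monotonicity gives $H(p^k)\le H(p)=H(\delta)$ at once. If $1/2<p\le\frac{\sqrt5-1}{2}<1/\sqrt2$, then $p^k\le p^2<1/2$ for all $k\ge2$, so $H(p^k)\le H(p^2)$, and it suffices to show $H(p^2)\le H(p)=H(1-p)$; since $p^2$ and $1-p$ both lie in $[0,1/2]$, this is equivalent to $p^2\le 1-p$, i.e. $p^2+p-1\le0$, i.e. $p\le\frac{\sqrt5-1}{2}$ --- precisely the hypothesis. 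The case $k=1$ is trivial.

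The argument is short; the only mild subtlety is the bookkeeping in the second case --- one must observe that $p\le\frac{\sqrt5-1}{2}$ already forces $p^2<1/2$, so no higher powers need separate handling, and that the reduction to $H(p^2)\le H(p)$ is the binding constraint among all $k\ge1$. What makes everything close cleanly is recognizing that $\delta^\star$ is characterized by the golden-ratio identity $(1-\delta^\star)^2 = \delta^\star$: at $\delta=\delta^\star$ one has $H\br*{(1-\delta^\star)^2}=H(\delta^\star)$, so the bound is tight and $\delta\ge\delta^\star$ is exactly the largest range for which this particular method applies (for $\delta$ slightly below $\delta^\star$ a test with two objects already has $H(Y_l)>H(\delta)$).
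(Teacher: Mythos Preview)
Your proof is correct and follows essentially the same approach as the paper's: both use subadditivity $H(\bfY_{[t]})\le\sum_l H(Y_l)$, identify $H(Y_l)=H\br*{(1-\delta)^{\abs{R_l}}}$, split on whether $\delta\ge 1/2$ or $\delta\in[\delta^\star,1/2)$, and in the latter case reduce to the key inequality $(1-\delta)^2\le\delta$, which is exactly the golden-ratio condition $\delta\ge\delta^\star$. Your write-up is a bit more explicit than the paper's in verifying that the bound for $k=2$ automatically controls all $k\ge 2$ (via $p^k\le p^2<1/2$ and monotonicity of $H$ on $[0,1/2]$), and in pointing out why $\delta^\star$ is sharp for this argument.
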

\begin{proof}
Using the entropy chain rule, for $ \delta \geq \delta^\star$, we have,
\begin{subequations}	
\begin{align}
	H(\bfY_{[t]}) &\leq \sum_{l \in [t]} H(Y_{l}) \\
		&= \sum_{l\in [t]} H((1-\delta)^{\abs{R_l}}) \\
		&\leq t H(1-\delta) = t H(\delta)\label{delta_star}
\end{align}
\end{subequations}
Inequality~\ref{delta_star} is obvious for $\delta \geq 1/2$. For $\delta \in [\delta^\star,1/2)$, $(1-\delta)^2\leq \delta < 1/2 \implies H((1-\delta)^2) \leq H(\delta) = H(1-\delta)$. Hence, \cref{delta_star} follows here as well. Now, using \cref{Fano} and \cref{delta_star} we get,
\begin{equation*}
	t\geq n (1-\epsilon/H(\delta))
\end{equation*}
\end{proof}

Thus, for $\delta\geq \delta^\star$ we cannot do any better than individual testing. In the rest of the section, we focus on the GT bound for $\delta\leq \delta^\star$. Even in this regime, we can use the fact that \cref{k_0} is not an integer for all values of $\delta$ to improve \cref{counting_bound} without much effort.

\begin{theorem}\label{thm:quantization_Gap} [Quantization Bound]
	$$ t/n  \geq \frac{ H(\delta) - \epsilon }{\displaystyle \max_{k\in \N} H((1-\delta)^k)}$$
\end{theorem}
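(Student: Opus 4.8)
The plan is to reuse the three-step skeleton of the counting bound recalled at the start of \cref{sec:impossibility_results} --- entropy of the input, Fano, independence bound on the output --- but to replace the blunt estimate \cref{independence_bound} in the third step by a marginally sharper per-test bound that exploits the fact that each test queries an \emph{integer} number of objects, exactly as in the proof of \cref{thm:c_gt_1_2}.

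First I would apply the chain rule together with ``conditioning reduces entropy'' to get $H(\bfY_{[t]}) \le \sum_{l\in[t]} H(Y_l)$. Since $X_1,\dots,X_n$ are i.i.d.\ $\bern(\delta)$ and test $l$ is positive exactly when at least one of its $\abs{R_l}$ queried objects is defective, $\Prob{Y_l=0}=(1-\delta)^{\abs{R_l}}$, so $H(Y_l)=H\br*{(1-\delta)^{\abs{R_l}}}$ by the symmetry $H(p)=H(1-p)$. As $\abs{R_l}\in\N$, each summand is at most $\max_{k\in\N}H\br*{(1-\delta)^k}$; this maximum is finite and attained, because $(1-\delta)^k\downarrow 0$ forces $H\br*{(1-\delta)^k}\to 0$ while the value at $k=1$ is $H(\delta)>0$, and (for $\delta$ in the range $\delta\le\delta^\star$ of interest) it is achieved at $\lfloor k_0(\delta)\rfloor$ or $\lceil k_0(\delta)\rceil$ with $k_0$ as in \cref{k_0}. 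Summing over $l$ gives $H(\bfY_{[t]}) \le t\,\max_{k\in\N}H\br*{(1-\delta)^k}$.

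Finally, feeding this into Fano's inequality \cref{Fano}, $nH(\delta)\le H(\bfY_{[t]})+\epsilon n$, and dividing through gives $t/n \ge \br*{H(\delta)-\epsilon}\big/\max_{k\in\N}H\br*{(1-\delta)^k}$, which is the claimed bound. There is essentially no obstacle in this argument --- the only two facts needing a line each are the identity $H(Y_l)=H((1-\delta)^{\abs{R_l}})$ (immediate from independence of the $X_i$) and the well-definedness of the maximum over $k$ (immediate from the exponential decay of $(1-\delta)^k$). The genuine difficulty, namely abandoning the subadditive bound $H(\bfY_{[t]})\le\sum_l H(Y_l)$ in favour of one that accounts for the forced overlaps between tests, is precisely what is postponed to \cref{thm:general_nagt} via the Madiman--Tetali inequalities.
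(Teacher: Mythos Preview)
Your proposal is correct and follows essentially the same argument as the paper: bound $H(\bfY_{[t]})\le\sum_l H(Y_l)$ by subadditivity, compute $H(Y_l)=H((1-\delta)^{\abs{R_l}})$, replace each term by $\max_{k\in\N}H((1-\delta)^k)$, and combine with Fano's inequality \cref{Fano}. The only additions you make---the explicit justification that the maximum over $k$ is finite and attained, and the remark that the maximizer is $\lfloor k_0(\delta)\rfloor$ or $\lceil k_0(\delta)\rceil$---are correct refinements but not logically necessary for the bound as stated.
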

\begin{proof}
Due to the fact that each test can contain only an integer number of objects, we have 
\begin{align}\label{eq:quantization_bound}
	H(Y_l) = H((1-\delta)^{\abs{R_l}}) \leq \max_{k\in \N} H((1-\delta)^k) \nonumber\\
	\implies H(\bfY_{[t]}) \leq \sum_{l\in [t]} H(Y_l) \leq t \max_{k\in \N} H((1-\delta)^k)
\end{align}
Hence \cref{thm:quantization_Gap} follows from \cref{Fano} and \cref{eq:quantization_bound}.
\end{proof}

Note that, for $\delta \notin \set{1-\frac{1}{2^{1/k}}}_{k\in \N}$, $\max_{k\in \N} H((1-\delta)^k) \lneq 1$. Therefore, the result in \cref{thm:quantization_Gap} improves over the classical counting bound.

\subsection{Upper Bound via Madiman Tetali inequality}
To improve \cref{independence_bound} further for all values of $\delta\leq \delta^\star$, we use the Madiman Tetali inequalities in \cite{madimanTetali09} to exploit the correlation between tests,
\begin{equation}\label{MT_0}
	H(\bfY_{[t]}) \leq \sum_{S \in \cC} \alpha(S) H(\bfY_{S} )
\end{equation}
where $\cC$ are a class of subsets of $[t]$ that cover $[t]$, and $\set{\alpha(S)}_{S\in \cC}$ denote a fractional cover of the hypergraph $\cC$ on vertex set $[t]$. This means that for each $i \in [t]$, the set of numbers $\set{\alpha(S)}_{S\in \cC}$ satisfy the relation $\sum_{S \in \cC: i \in S} \alpha(S) \ge 1$.

Note that using the independence bound for $H(\bfY_{S} )$ in \cref{MT_0} we have,
\begin{equation}
	H(\bfY_{[t]}) \leq \sum_{S \in \cC} \alpha(S) \abs{S}
\end{equation}
where $\sum_{S \in \cC} \alpha(S) \abs{S} \geq t$. Therefore, to improve \cref{independence_bound} we have to utilize the fact that $\bfY_{S}$ have joint entropy less than $\abs{S}$. Heeding this intuition, first for a fixed set $S\subseteq [t]$, we derive a non-trivial upper bound on $ H(\bfY_S) $ in \cref{sec:ub_ent_Ys}, for tests $\bfY_S$ such that all of them have at least one object $ X \in \set{X_i}_{i\in [n]}$ in common ie $X \in \cap_{l\in S} R_l$. Next, we use this bound to derive a closed-form expression for the joint entropy $ H(\bfY_{[t]}) $ in \cref{MT_0} for a constant row weight  NAGT matrix $M$ in \cref{sec:constant_row_wt}. Finally, we generalize the upper bound to derive a closed form expression for arbitrary row weight matrices in \cref{sec:general_GT_matrix}. Using this expression and \cref{Fano}, we get an improvement over the counting lower bound in \cref{thm:general_nagt}.

\subsection{Upper bound on \texorpdfstring{$ H(\bfY_S) $}{H(Y\_S)}}\label{sec:ub_ent_Ys}


Consider a set $S\subseteq [t]$ such that there exists an object $ X \in \set{X_i}_{i\in [n]}$ that is common in all the tests $\bfY_S$. Also assume that, $\abs{R_l} = k, \forall l\in S$. In this case, we upper bound the joint entropy of the tests $\bfY_S$ in \cref{thm:ub_ent_Ys}.
\begin{theorem}\label{thm:ub_ent_Ys}	
Consider $S\subseteq [t]$, such that $\abs{R_l} = k$, $\forall l\in S$ and all tests $\bfY_S$ have at least one object in common. Then,
\begin{equation}
	H(\bfY_S) \leq (1-\delta) \abs{S} H((1-\delta)^{k-1}) + H(\delta) - f_{\delta,k}(\abs{S})
\end{equation}
where
\begin{equation}\label{f_c_k}
	f_{\delta,k}(s) \defeq (\delta + (1-\delta) {p_{\delta,k}}^s) H\left (\frac{\delta}{\delta + (1-\delta) {p_{\delta,k}}^s} \right )
\end{equation}
and
\begin{equation}\label{p_c_k}
	p_{\delta,k} \defeq (1 - (1-\delta)^{k-1})
\end{equation}
\end{theorem}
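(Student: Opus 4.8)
The plan is to condition on the common object. Let $X$ be the indicator variable of a fixed object lying in $\bigcap_{l\in S}R_l$ (one exists by hypothesis); it is $\bern(\delta)$ and, since the objects are i.i.d., the chain rule applied two ways to the pair $(\bfY_S,X)$ gives
\[
H(\bfY_S) \;=\; H(X) + H(\bfY_S \mid X) - H(X \mid \bfY_S) \;=\; H(\delta) + H(\bfY_S \mid X) - H(X \mid \bfY_S).
\]
So it suffices to prove two bounds: (a) $H(\bfY_S \mid X) \le (1-\delta)\abs{S}\,H\!\br{(1-\delta)^{k-1}}$, and (b) $H(X \mid \bfY_S) \ge f_{\delta,k}(\abs{S})$.

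For (a): if $X=1$ then every test of $S$ contains a defective, so $\bfY_S$ is the all-ones vector deterministically and $H(\bfY_S\mid X=1)=0$; if $X=0$, subadditivity of entropy gives $H(\bfY_S\mid X=0)\le\sum_{l\in S}H(Y_l\mid X=0)$, and since $Y_l=0$ conditioned on $X=0$ exactly when the remaining $k-1$ objects of $R_l$ are all non-defective, each term equals $H\!\br{(1-\delta)^{k-1}}$. Averaging over $X$ yields (a).

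For (b): conditioned on $\bfY_S=y$, the value of $X$ is pinned down to $0$ unless $y=\mathbf{1}$, because any negative test forbids the shared object from being defective; hence $H(X\mid\bfY_S)=\Prob{\bfY_S=\mathbf{1}}\,H(X\mid\bfY_S=\mathbf{1})$. Writing $q\defeq\Prob{\text{all tests of }S\text{ positive}\mid X=0}$, we get $\Prob{\bfY_S=\mathbf{1}}=\delta+(1-\delta)q$ and $\Prob{X=1\mid\bfY_S=\mathbf{1}}=\delta/(\delta+(1-\delta)q)$, so $H(X\mid\bfY_S)=g(q)$ with $g(q)\defeq\br{\delta+(1-\delta)q}H\!\br*{\tfrac{\delta}{\delta+(1-\delta)q}}$. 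Two facts then finish it. First, $q\ge p_{\delta,k}^{\abs{S}}$: conditioned on $X=0$, each event $\set{Y_l=1}$ is increasing in the $\bern(\delta)$ product measure on the objects $[n]\setminus\set{X}$, so by Harris's (FKG) inequality their intersection has probability at least $\prod_{l\in S}\Prob{Y_l=1\mid X=0}=p_{\delta,k}^{\abs{S}}$. Second, $g$ is nondecreasing on $[0,1]$: after the substitution $u=\delta+(1-\delta)q\in[\delta,1]$ one has $uH(\delta/u)=-\delta\log\delta+u\log u-(u-\delta)\log(u-\delta)$, whose $u$-derivative is $\log\tfrac{u}{u-\delta}\ge 0$. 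Combining, $H(X\mid\bfY_S)=g(q)\ge g\br{p_{\delta,k}^{\abs{S}}}=f_{\delta,k}(\abs{S})$, and substituting (a) and (b) into the chain-rule identity gives the theorem.

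The bookkeeping in the chain rule and in bound (a) is routine; the genuinely nontrivial point is the lower bound $q\ge p_{\delta,k}^{\abs{S}}$ in (b), where one must recognize that overlaps among the tests only increase the probability that all of them are positive (positive association of monotone events) and invoke Harris/FKG, and then pair this with the monotonicity of $u\mapsto uH(\delta/u)$. I expect that to be the main obstacle to present cleanly.
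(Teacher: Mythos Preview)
Your proof is correct and follows the same overall architecture as the paper: the same chain-rule identity $H(\bfY_S)=H(\delta)+H(\bfY_S\mid X)-H(X\mid\bfY_S)$, the same subadditivity bound for part (a), and the same reduction of part (b) to a lower bound on $q=\Prob{\bfY_S=\mathbf{1}\mid X=0}$ combined with monotonicity of $g(q)=(\delta+(1-\delta)q)H\!\br*{\tfrac{\delta}{\delta+(1-\delta)q}}$.

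The one substantive difference is how you obtain $q\ge p_{\delta,k}^{\abs{S}}$. The paper proves this as a separate lemma (\cref{lem:probY_S}) by a swap argument: if two tests share an object, replace that object in one test by a fresh object outside all tests, and show via inclusion--exclusion that $\Prob{\bfY_S=\mathbf{1}}$ can only decrease; iterating drives the tests to disjointness, where the probability factors. Your route via Harris/FKG is shorter and more conceptual --- ``each $\{Y_l=1\}$ is an increasing event in the i.i.d.\ Bernoulli coordinates, hence positively correlated'' --- and yields the same inequality in one line. Both are valid; the FKG argument is arguably the cleaner way to see why overlaps help rather than hurt, while the paper's swap argument is self-contained and does not invoke an external correlation inequality.
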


In the rest of this section, we give the proof of \cref{thm:ub_ent_Ys}. Assume that the tests $\bfY_S$ have object $X\in \set{X_i}_{i\in [n]}$ in common. Let $ \bfY^\prime_{S}$ denote the set of tests containing the same objects as $\bfY_{S}$ but with object $X$ removed from all tests ie. $ \bfY^\prime_{S} \defeq \bfY(\cR(S)-\set{X})$. We have,
\begin{gather}
	H(\bfY_{S} ) = H(\bfY_{S} | X) + H(X )  - H(X | \bfY_{S} ) \label{acrobatics}\\
	H(\bfY_{S} | X) = (1-\delta) H(\bfY^\prime_{S}) \leq \abs{S} (1-\delta) H((1-\delta)^{k-1}) \label{approx1}\\
	H(X ) = H(\delta)  \label{approx2}\\
	H(X | \bfY_{S} ) =  (\delta + (1-\delta) \Prob{\bfY^\prime_{S} = {\bf{1}}}) H \left (\frac{\delta}{\delta + (1-\delta) \Prob{\bfY^\prime_{S} = {\bf{1}}}} \right ) \label{approx3}
\end{gather}

Therefore, combining \cref{acrobatics}, \cref{approx1}, \cref{approx2}, and \cref{approx3}, we have
\begin{align}\label{approx}
	H(\bfY_S ) 
		&\leq  (1-\delta) H((1-\delta)^{k-1}) + H(\delta) -  H(X | \bfY_{S} ) 
\end{align}

Note that, 
\begin{equation}\label{positive_diff}
	\frac{\partial (\delta + (1-\delta) x) H(\frac{\delta}{\delta + (1-\delta) x})}{\partial x} = (1-\delta) \log \br*{1 + \frac{\delta}{x (1-\delta)}} \geq 0
\end{equation}
Thus, the expression for $H(X | \bfY_{S})$ in \cref{approx3} is minimized at the minimum possible value of $\Prob{\bfY^\prime_{S} = {\bf{1}}}$.  We lower bound the probability $\Prob{\bfY^\prime_{S} = \bf{1}}$ using \cref{lem:probY_S} to get an upper bound on \cref{approx}.
\begin{lemma}
\label{lem:probY_S} 
For any $S\subseteq [t]$,  we have,
\begin{equation}
	\min_{\substack{R_l \in \cR(S):\\ \abs{R_l} = r_l}}\Prob{\bfY_S={\bf1}} = \prod_{l\in S}(1-(1-\delta)^{r_l})
\end{equation}
\end{lemma}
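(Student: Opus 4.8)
The plan is to read off that the event $\set{Y_l=1}$---test $l$, with object set $R_l$, contains at least one defective---is a monotone \emph{increasing} event in the i.i.d.\ $\bern(\delta)$ defectivity vector $\bfX_{[n]}$, so all such events are positively correlated. Consequently $\Prob{\bfY_S={\bf 1}}$ can only be made as small as the product of the single-test marginals, and that value is actually achieved when the object sets $\set{R_l}_{l\in S}$ are pairwise disjoint. This immediately identifies both the value of the minimum and a minimizer.

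Concretely, I would first record the single-test marginal: $Y_l=0$ exactly when none of the $r_l$ objects of $R_l$ is defective, so $\Prob{Y_l=1}=1-(1-\delta)^{r_l}$. Next, for the lower bound, view $\bfX_{[n]}\in\set{0,1}^n$ as a product probability space under the coordinatewise partial order; each $A_l\defeq\set{Y_l=1}=\set{\bfX_{R_l}\neq{\bf 0}}$ is an increasing event (making a non-defective object defective cannot turn a positive test negative), and a finite intersection of increasing events is again increasing. By the Harris/FKG inequality for product measures, any two increasing events are nonnegatively correlated; inducting on $\abs{S}$ (applying this to the pair $\bigcap_{l\in S'}A_l$ and $A_{l^*}$) gives $\Prob{\bigcap_{l\in S}A_l}\ge\prod_{l\in S}\Prob{A_l}$, i.e.\ $\Prob{\bfY_S={\bf 1}}\ge\prod_{l\in S}(1-(1-\delta)^{r_l})$ for \emph{every} admissible choice of the object sets. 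For the matching direction, I would exhibit a minimizer: take the $R_l$, $l\in S$, pairwise disjoint (possible whenever $\sum_{l\in S}r_l\le n$, which holds in every regime where we invoke the lemma); then the $A_l$ are functions of disjoint coordinate blocks of $\bfX_{[n]}$, hence mutually independent, so $\Prob{\bfY_S={\bf 1}}=\prod_{l\in S}(1-(1-\delta)^{r_l})$ for that configuration. Combining the two directions yields the claimed equality.

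The only step needing care---and it is minor---is the positive-correlation claim. One option is to cite the Harris inequality directly; alternatively one reproves it by the standard one-coordinate-at-a-time induction (condition on a single coordinate $X_j$ with $j\in\bigcup_{l\in S}R_l$: for fixed $X_j$ the relevant indicator functions are still monotone in the remaining $n-1$ coordinates, and the conditional expectations are monotone in $X_j$, so two nested applications of the $n=1$ case close the induction). Everything else is a direct computation. I would also note, for completeness, that \cref{thm:ub_ent_Ys} only uses the lower-bound direction $\Prob{\bfY_S={\bf 1}}\ge\prod_{l\in S}(1-(1-\delta)^{r_l})$, so the realizability caveat in the equality is immaterial for the application.
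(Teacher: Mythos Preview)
Your argument is correct. You observe that each event $\set{Y_l=1}$ is monotone increasing in the i.i.d.\ defectivity vector, invoke the Harris/FKG inequality on the product space to get $\Prob{\bigcap_l\set{Y_l=1}}\ge\prod_l\Prob{Y_l=1}$, and then note that equality is attained when the $R_l$ are pairwise disjoint. This is a clean and standard route.

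The paper takes a genuinely different, more combinatorial path. Rather than appeal to a correlation inequality, it runs an exchange argument: starting from any configuration in which some object $i$ lies in at least two of the tests $Y_1,\ldots,Y_a$, it replaces $i$ in $R_1$ by a fresh object $i^\ast\notin[\cR(S)]$ and shows directly (via an inclusion--exclusion expansion of $\Prob{\bfY_S=\mathbf{1}}$ carried out in the appendix) that this swap can only decrease the probability. Iterating the swap drives the configuration to full disjointness, at which point independence gives the product value.

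What each buys: your FKG route is shorter, conceptually transparent, and avoids the somewhat intricate inclusion--exclusion bookkeeping; it also makes clear that only the lower-bound direction is needed downstream. The paper's exchange argument is fully self-contained (no external inequality cited) and gives an explicit local improvement step, which some readers may prefer. Both proofs share the same mild existence caveat---enough items must be available to realize disjoint object sets (equivalently, a fresh $i^\ast$)---and you flag this appropriately.
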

\begin{proof}
Note that $\Prob{Y_l = 1} = (1-\delta)^{\abs{R_l}}$. We show that the minimization in \cref{eq:probY_S} occurs when all object sets $\set{R_l}_{l\in S}$ are disjoint. Since, in that case the tests in $\bfY_S$ are independent, we must have,
\begin{align}\label{eq:probY_S_1}
\min_{\substack{R_l \in \cR(S):\\ \abs{R_l} = r_l}}\Prob{\bfY_S={\bf1}} &= \prod_{l\in S} \Prob{Y_l={1}} \\
	&= \prod_{l\in S}(1-(1-\delta)^{r_l}) \nonumber
\end{align}

Without loss of generality, let $S=[s]$. Suppose that, the tests $\bfY_S$ are such that there exists an object $i$ that is common among tests $Y_1, Y_2,\ldots, Y_a$ for some $a\in [2,\abs{S}]$. Then, we show that, we can decrease the probability $\Prob{\bfY_{S} = \bf{1}}$ by modifying $R_1$ to $R_1^\ast$ by including an object  ${i^\ast} \in [n] \setminus [\cR(S)]$ in $Y_1$ instead of object $i$ such that $R_1^\ast = (R_1\setminus \set{i}) \cup \set{i^\ast}$. Denote the modified tests by $\bfY_S^{\ast}$. Then, it suffices to prove that,
\begin{equation}\label{probY_S}
	\Prob{\bfY_{S}= \bf{1}} \geq \Prob{\bfY^{\ast}_{S}= \bf{1}}
\end{equation}
since using \cref{probY_S} recursively for objects contained in more than one tests in $\bfY_S^\ast$ we can prove \cref{eq:probY_S_1}. We prove \cref{probY_S} in \cref{sec:proof_lemma_probY_s}.
\end{proof}

Thus, from \cref{lem:probY_S} we have,
\begin{equation}\label{eq:probY_S}
	\Prob{\bfY^\prime_{S} = \bf{1}} \geq p_{\delta,k}^{\abs{S}} ,\;\forall i\in [n]
\end{equation}
where $p_{\delta,k}$ is defined in \cref{p_c_k}. Hence, from \cref{approx3}, \cref{positive_diff}, and~\cref{eq:probY_S}, we have,
\begin{equation}\label{condEntLB}
	H(X|\bfY_S) \geq f_{\delta,k}(\abs{S})
\end{equation}
Now, combining \cref{approx,condEntLB} we have,
$$ H(\bfY_S) \leq  (1-\delta) \abs{S} H((1-\delta)^{k-1}) + H(\delta) -  f_{\delta,k}(\abs{S}) $$
where $f_{\delta,k}(s)$ is as defined in \cref{f_c_k}.

\subsection{Constant Row Weight Testing Matrix}\label{sec:constant_row_wt}

In this section we assume that matrix $M$ has constant row weight $k$ such that $k\geq 2$. Intuitively, this is  a very natural assumption. Since it allows each test in matrix $M$ to be symmetric. This assumption also allows us to easily upper bound the joint entropy of the tests $H(\bfY_{[t]})$ using \cref{MT_0}, as we see below.

To apply \cref{MT_0}, we consider the hypergraph $\cC$ with $n$ edges and having matrix $M$ as the incidence matrix. Thus, $\cC\defeq \set{S_i}_{i=1}^n$, where $S_i$ denotes the support set of column $i$ in $M$. Note that, in this case, $\alpha(S_i) = \frac{1}{k}$ forms a cover of the hypergraph $\cC$. Therefore, we have,
\begin{equation}\label{MT}
	H(\bfY_{[t]}) \leq \frac1k \sum_{i=1}^n H(\bfY_{S_i})
\end{equation}
We upper bound the expression on the RHS in \cref{MT} to get an asymptotic closed form expression for the joint entropy of the form,
\begin{equation}\label{closed_form_exp}
\frac{H(\bfY_{[t]})}n \leq  g_{\delta,k}(t/n)	
\end{equation}
where $g_{\delta,k}(T)$ is shown to be an increasing function of $T$. Thus, using \cref{Fano}, we have,
\begin{theorem}	\label{thm:constant_row_wt}
Consider the non-adaptive PGT problem, with tolerable probability of error $\epsilon$. Assume that each object is defective independently with probability  $\delta$. Then, for a constant row weight $k$ group testing matrix, we have asymptotically in $n$,
\begin{equation}
	\frac{t}n \geq g_{\delta,k}^{-1}( H(\delta) - \epsilon)
\end{equation}
where $g_{\delta,k}^{-1}(x) \defeq y$ such that $g_{\delta,k}(y) = x$ and 
\begin{equation}\label{g_c_k}
	g_{\delta,k}(T) \defeq T (1-\delta) H((1-\delta)^{k-1})  + \frac{1}{k} \br*{ H(\delta) - f_{\delta,k}(k \;T) } 
\end{equation}
where $f_{\delta,k}(T)$ is defined in \cref{f_c_k}.
\end{theorem}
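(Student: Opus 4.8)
The plan is to feed the per-set bound of \cref{thm:ub_ent_Ys} into the Madiman--Tetali inequality \cref{MT} and then convexify with Jensen. First I would take $\cC = \set{S_i}_{i=1}^n$, the supports of the columns of $M$. Since $M$ has constant row weight $k$, for every row $l$ we have $\sum_{i:\,l\in S_i}\alpha(S_i) = \abs{R_l}/k = 1$, so $\alpha(S_i)=1/k$ is a (tight) fractional cover and \cref{MT} applies, giving $H(\bfY_{[t]}) \le \frac1k\sum_{i=1}^n H(\bfY_{S_i})$. By construction object $i$ is contained in every test indexed by $S_i$, and all rows have weight exactly $k$, so \cref{thm:ub_ent_Ys} bounds each summand by $(1-\delta)\abs{S_i}H((1-\delta)^{k-1}) + H(\delta) - f_{\delta,k}(\abs{S_i})$.

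Next I would sum over $i$ and use the double count $\sum_{i=1}^n \abs{S_i} = \sum_{l=1}^t \abs{R_l} = tk$. After multiplying by $1/k$ the first term becomes $t(1-\delta)H((1-\delta)^{k-1})$ and the middle term becomes $\tfrac nk H(\delta)$. For the subtracted term I need a \emph{lower} bound on $\sum_i f_{\delta,k}(\abs{S_i})$ under the constraint $\sum_i \abs{S_i} = tk$, and this is where convexity of $s\mapsto f_{\delta,k}(s)$ enters: Jensen's inequality gives $\sum_i f_{\delta,k}(\abs{S_i}) \ge n\, f_{\delta,k}(tk/n)$. Putting the pieces together, dividing by $n$, and writing $T=t/n$ yields $H(\bfY_{[t]})/n \le T(1-\delta)H((1-\delta)^{k-1}) + \tfrac1k\br*{H(\delta) - f_{\delta,k}(kT)} = g_{\delta,k}(T)$. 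Combining with \cref{Fano} (using $H(\bfX_{[n]}) = nH(\delta)$) gives $H(\delta)-\epsilon \le g_{\delta,k}(t/n)$, and since $g_{\delta,k}$ is strictly increasing its inverse exists and $t/n \ge g_{\delta,k}^{-1}(H(\delta)-\epsilon)$.

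The hard part will be the two analytic facts I invoked above. To get convexity of $f_{\delta,k}$, write $f_{\delta,k}(s)=\phi(p_{\delta,k}^{\,s})$ with $\phi(x)=(\delta+(1-\delta)x)H\!\br*{\tfrac{\delta}{\delta+(1-\delta)x}}$ and $p_{\delta,k}\in(0,1)$; using $\phi'(x)=(1-\delta)\log\!\br*{1+\tfrac{\delta}{x(1-\delta)}}\ge 0$ from \cref{positive_diff}, a short computation reduces $f_{\delta,k}''(s)\ge 0$ to the claim that $x\mapsto x\phi'(x)$ is nondecreasing, i.e.\ that $u\mapsto u\log(1+\delta/u)$ is nondecreasing, which follows from the elementary bound $\log(1+v)\ge v/(1+v)$ for $v>0$. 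The same derivative formula shows $f_{\delta,k}$ is decreasing, whence $g_{\delta,k}'(T) = (1-\delta)H((1-\delta)^{k-1}) - f_{\delta,k}'(kT) \ge (1-\delta)H((1-\delta)^{k-1}) > 0$ for $k\ge 2$, so $g_{\delta,k}$ is strictly increasing and its inverse on the relevant range is well defined. Everything else is bookkeeping; note in particular that no asymptotic slack actually enters beyond the linear scaling $d=\delta n$ that fixes the per-item defect probability.
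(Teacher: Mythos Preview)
Your proposal is correct and follows essentially the same route as the paper: apply the weak Madiman--Tetali inequality \cref{MT} with the column-support hypergraph and cover weights $1/k$, plug in \cref{thm:ub_ent_Ys} termwise, use the double count $\sum_i\abs{S_i}=tk$, apply Jensen via the convexity of $f_{\delta,k}$, and invert the resulting monotone $g_{\delta,k}$ after invoking \cref{Fano}. Your convexity argument (reducing $f_{\delta,k}''\ge 0$ to monotonicity of $u\mapsto u\log(1+\delta/u)$ via $\ln(1+v)\ge v/(1+v)$) is just a repackaging of the paper's \cref{lem:JensenApplied}, which computes the same second derivative directly and invokes the same elementary inequality.
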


The proof of \cref{thm:constant_row_wt} follows from \cref{closed_form_exp} and~\cref{Fano}. The form of $g_{\delta,k}(T)$ in \cref{closed_form_exp} is derived below as
\begin{align}
	H(\bfY_{[t]}) &\leq  t (1-\delta) H((1-\delta)^{k-1}) + \frac{n}{k} H(\delta) -  \frac{1}{k} \sum_{i=1}^{n}  f_{\delta,k}(\abs{S})  \label{MT_applied3} \\
		&\leq  t (1-\delta) H((1-\delta)^{k-1}) + \frac{n}{k} \br*{ H(\delta) -  f_{\delta,k}\left (\frac{1}{n} \sum_{i=1}^{n} \abs{S} \right ) }  \label{MT_applied4} \\
		&=  t (1-\delta) H((1-\delta)^{k-1}) + \frac{n}{k} \br*{ H(\delta) -  f_{\delta,k}(k t/n) }  \label{MT_applied5} \\
		&= n g_{\delta,k}(t/n) \nonumber
\end{align}
where \cref{MT_applied3} follows from \cref{thm:ub_ent_Ys} and \cref{MT}, and \cref{MT_applied4} follows from the convexity of $f_{\delta,k}(s)$ from \cref{lem:JensenApplied}. Note that since $f_{\delta,k}(s)$ is a convex decreasing function of $s$,  $g_{\delta,k}(T)$ must be a concave increasing function of $T$. Thus, \cref{closed_form_exp} and hence \cref{thm:constant_row_wt} follows.

{\lemma \label{lem:JensenApplied} $ f_{\delta,k}(s) \defeq (\delta + (1-\delta) {p_{\delta,k}}^s) H(\frac{\delta}{\delta + (1-\delta) {p_{\delta,k}}^s})$ is a convex decreasing function of $s$
}
\begin{proof}
	We have,
	\begin{equation}\label{derivative_fck}
		\frac{\partial f_{\delta,k}(s)}{\partial s} = (1-\delta) p_{\delta,k}^s \ln p_{\delta,k}  \log\br*{ 1+\frac{c}{(1-c)p_{\delta,k}^s} } <0
	\end{equation}
	and
	\begin{equation}\label{convexity_fck}
		\frac{\partial^2 f_{\delta,k}(s)}{\partial s^2} = \frac{(\ln p_{\delta,k})^2}{\ln 2} y \br*{ \underbrace{\ln (1+\delta/y) - \frac{\delta/ y}{1+\delta/y}}_{\Psi(\delta/y)} } > 0
	\end{equation}
	where $y = (1-\delta) p_{\delta,k}^s $. Since $\Psi(z) \defeq \ln(1+z)-\frac{z}{1+z}$ is always positive for $z>0$, $f_{\delta,k}(s)$ is a decreasing convex function of $s$ from \cref{convexity_fck,derivative_fck}.
\end{proof}

\subsection{General Testing Matrix}\label{sec:general_GT_matrix}

In this section, we remove the assumption that matrix $M$ has a fixed row weight $k$ and derive an upper bound on $ H(\bfY_{[t]}) $ -- better than \cref{independence_bound} -- for the most general case. We use this upper bound to improve \cref{counting_bound} in \cref{thm:general_nagt}. 

We separate the matrix $M$ into submatrices $\set{M_k}_k \in\set{0,1}^{t_k\times n}$ based on the number of objects in the tests. Thus, matrix $M_k$ has $t_k = \alpha_k t$ tests of weight $k$ such that $\sum_{k=1}^\infty \alpha_k = 1$. 

Now, we show that the analysis in \cref{sec:constant_row_wt} follows through for each matrix $M_k$. Let $t_0\defeq 0$. Assume w.l.o.g. that the tests corresponding to $M_k$ are $\bfY_{[t_{k-1}+1,t_k]}$. Denote the support sets of column $i$ in $M_k$ by $S_{k,i}$. Note that some of the columns in the matrix may be empty, i.e. $\abs{S_{k,i}} = 0$. Thus let $\cC_k^\prime$ denote the support sets corresponding to the non-empty columns. Let $\cC_k$ denote the class of support sets $\set{S_{k,i}}_{i\in [n]}$ where each empty column is considered a distinct set. Therefore we have,
\begin{align}\label{Mk_MT}
	H(\bfY_{[t_{k-1}+1,t_k]}) &\leq \sum_{S_{k,i} \in \cC_k^\prime} \alpha(S) H(\bfY_{S} ) \nonumber \\
		&= \frac{1}{k} \sum_{S_{k,i} \in \cC_k^\prime} H(\bfY_{S_{k,i}} | X_i) + H(X_{i} )  - H(X_{i} | \bfY_{S_{k,i}} )
\end{align}
When $\abs{S_{k,i}} = 0$, we have $ H(X_{i} | \bfY_{S_{k,i}} ) = H(X_{i})$ and $H(\bfY_{S_{k,i}} | X_i)  = 0$. Note that for $\abs{S_{k,i}}=0$ the lower bound in \cref{condEntLB} also gives $ H(X_{i} | \bfY_{S_{k,i}} ) \leq f_{\delta,k}(0)= H(\delta)$. Therefore, 
\begin{equation}\label{acrobatics_1}
	  \sum_{S_{k,i} \in \cC_k\setminus \cC_k^\prime} \br*{ H(\bfY_{S_{k,i}} | X_i) + H(X_{i} )  - H(X_{i} | \bfY_{S_{k,i}} ) } = 0 
\end{equation}
Hence, combining \cref{Mk_MT,acrobatics_1} we have
\begin{align}\label{Mk_MT_1}
	H(\bfY_{[t_{k-1}+1,t_k]}) &\leq \frac{1}{k} \sum_{S_{k,i} \in \cC_k} \br*{ H(\bfY_{S_{k,i}} | X_i) + H(X_{i} )  - H(X_{i} | \bfY_{S_{k,i}} ) }
\end{align}
\begin{remark}
Note that the manipulation in \cref{acrobatics_1}, although seemingly unnecessary, is required because $\abs{S_i} = 0$ is not possible in \cref{sec:constant_row_wt}. But since this is possible in this section with non-constant weight GT matrices, the lower bound of $H(X_{i} | \bfY_{S_{k,i}} ) $ in \cref{condEntLB} may not hold in this case. But this algebraic manipulation resolves that problem.
\end{remark}

Using the expressions in \cref{approx1}, \cref{approx2} and \cref{condEntLB} in \cref{Mk_MT_1}, we have,
\begin{align}\label{Mk_MT_2}
	H(\bfY_{[t_{k-1}+1,t_k]}) &\leq \frac{1}{k} \sum_{S_{k,i} \in \cC_k} \br*{ H(\bfY_{S_{k,i}} | X_i) + H(X_{i} )  - H(X_{i} | \bfY_{S_{k,i}} ) } \nonumber \\
		&\leq t_k (1-\delta) H((1-\delta)^{k-1})  + \frac{n}{k} \br*{ H(\delta) - f_{\delta,k}\left (\frac{1}{n} \sum_{i\in [n]}\abs{S_{k,i}} \right ) }   \nonumber\\ 
	\iff\frac{H(\bfY_{[t_{k-1}+1,t_k]})}{n} &\leq (t_k/n) (1-\delta) H((1-\delta)^{k-1})  + \frac{1}{k} \br*{ H(\delta) - f_{\delta,k}\left (\frac{k \;t_k}{n} \right ) }   \nonumber\\ 
		&= \alpha_k \frac{t}{n} (1-\delta) H((1-\delta)^{k-1})  + \frac{1}{k} \br*{ H(\delta) - f_{\delta,k}\left (k \alpha_k \frac{t}{n}\right ) }   \nonumber\\ 
		&= g_{\delta,k} \left ( \frac{\alpha_k t}{n} \right )
\end{align}
where for $k=1$, we define,
\begin{equation}\label{gk1}
	g_{\delta,1}(T) \defeq T H(\delta).
\end{equation}

Thus, we have from \cref{Mk_MT_2},
\begin{align}
	\frac{H(\bfY_{[t]})}{n} &\leq \sum_{k\geq 1} \frac{H(\bfY_{[t_{k-1}+1,t_k]})}{n} \\
		&\leq \sum_{k \geq 1}  g_{\delta,k}(\alpha_k t/n) \\
		&\leq  \max_{\substack{\set{\alpha_k}_k \\ : \sum_k \alpha_k = 1}}  \sum_{k \geq 1}  g_{\delta,k}(\alpha_k t/n) \label{Entropy_UB1}\\
		&=  \max_{k\geq 1} g_{\delta,k}(t/n) \label{Entropy_UB2}
\end{align}
where~\cref{Entropy_UB2} follows since the maximization in~\cref{Entropy_UB1} is over a convex polytope and $\sum_{k \geq 1}  g_{\delta,k}(\alpha_k T) $ is a concave increasing function of $\set{\alpha_k}_{k\geq 1}$ from \cref{convexity_fck} and~\cref{gk1} and the following equations,
\begin{subequations}
	\begin{equation}
		\frac{\partial^2 \sum_{k\geq 1} g_{\delta,k}(\alpha_k T)}{\partial \alpha_{m}^2} =  - m \;T^2\; \frac{\partial^2 f_{\delta,m}(x)}{\partial x^2}\big\vert_{x = m \;\alpha_mT}
	\end{equation}
	\begin{equation}
		\frac{\partial^2 \sum_{k\geq 1} g_{\delta,k}(\alpha_k T)}{\partial \alpha_{m} \partial \alpha_{m^\prime}} =  0.
	\end{equation}
\end{subequations}

Then, from \cref{Entropy_UB2} and~\cref{Fano}, we have our main result.
\begin{theorem}\label{thm:general_nagt}
Consider the non-adaptive PGT problem with probability of error at most $\epsilon$. Assume that each object is defective independently with probability  $\delta$. Then, we have asymptotically in $n$,
\begin{equation}	\label{eq:finalUB}
	t/n \geq  g_{\delta}^{-1}( H(\delta) - \epsilon)
\end{equation}
where
\begin{equation}\label{eq:finalUB_func}
	g_{\delta}(T) \defeq \min_{k\in \N} g_{\delta,k}(T)
\end{equation}
\end{theorem}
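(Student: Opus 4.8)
The plan is to keep every step of the classical counting-bound argument except the crude estimate \cref{independence_bound} --- which is the only source of slack between the counting bound and the truth --- and to replace it by an upper bound on $H(\bfY_{[t]})$ that charges for the overlaps between rows of $M$ that cannot be avoided in the linear regime. The target is an inequality $H(\bfY_{[t]})/n \le g_\delta(t/n)$ with $g_\delta$ increasing and given by \cref{eq:finalUB_func}, after which Fano's inequality \cref{Fano} and a monotone inversion finish the proof. Following the two-layer structure already set up, I would first dispose of the constant-row-weight case and then assemble the general bound from it.

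\textbf{Per-weight-class bound.} First I would split the rows of $M$ by Hamming weight: let $M_k$ be the submatrix of the $t_k=\alpha_k t$ rows of weight $k$, so $\sum_{k\ge1}\alpha_k=1$, and write $\bfY^{(k)}$ for their outcomes. Subadditivity of entropy gives $H(\bfY_{[t]})\le\sum_{k\ge1}H(\bfY^{(k)})$, so it suffices to bound each $H(\bfY^{(k)})$. For $k\ge2$ I would feed into the Madiman--Tetali inequality \cref{MT_0} the cover $\cC_k=\set{S_{k,i}}_{i\in[n]}$ made from the column supports of $M_k$ (each empty column counted as its own singleton); this hypergraph is $k$-regular, so $\alpha(S)\equiv 1/k$ is a fractional cover and $H(\bfY^{(k)})\le\tfrac1k\sum_i H(\bfY_{S_{k,i}})$. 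Every non-empty $S_{k,i}$ is a family of weight-$k$ tests all containing the object $X_i$, so \cref{thm:ub_ent_Ys} applies termwise; the empty columns drop out after the elementary rearrangement in \cref{acrobatics_1} (and $f_{\delta,k}(0)=H(\delta)$ keeps the closed form consistent). Using $\sum_i\abs{S_{k,i}}=kt_k$ together with the convexity of $f_{\delta,k}$ from \cref{lem:JensenApplied}, Jensen's inequality turns $\tfrac1k\sum_i f_{\delta,k}(\abs{S_{k,i}})$ into $\tfrac nk f_{\delta,k}(kt_k/n)$, which gives $H(\bfY^{(k)})/n\le g_{\delta,k}(\alpha_k t/n)$ with $g_{\delta,k}$ as in \cref{g_c_k}; the case $k=1$ is just the independence bound $g_{\delta,1}(T)=TH(\delta)$ of \cref{gk1}.

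\textbf{Optimizing the weight profile --- the main obstacle.} Adding over $k$, $H(\bfY_{[t]})/n\le\sum_{k\ge1}g_{\delta,k}(\alpha_k t/n)$ for whatever profile $\set{\alpha_k}$ the matrix happens to realize; to obtain a bound that depends only on $t/n$ I would pass to the supremum over all profiles with $\sum_k\alpha_k=1$, and this is the step I expect to be hardest. One must show that this supremum over the simplex is controlled by the extreme profiles that place all the weight on a single row-weight $k$, collapsing the multi-parameter bound to the one-dimensional function $g_\delta$ of \cref{eq:finalUB_func}. The structure available is that each $g_{\delta,k}$ is increasing and concave with $g_{\delta,k}(0)=0$ (from \cref{convexity_fck,gk1}) and that the objective is separable with diagonal Hessian in the $\alpha_k$'s; the delicate points are making the reduction to a vertex precise and controlling the infinite tail $k\to\infty$, where $g_{\delta,k}(T)\to0$ because a test of large weight is almost surely positive. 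This is exactly the passage from \cref{Entropy_UB1} to \cref{Entropy_UB2}.

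\textbf{Finishing.} Once $H(\bfY_{[t]})/n\le g_\delta(t/n)$ is in hand, \cref{Fano} --- which reads $nH(\delta)=H(\bfX_{[n]})\le H(\bfY_{[t]})+\epsilon n$ --- gives $H(\delta)-\epsilon\le g_\delta(t/n)$, and since $g_\delta$ is increasing and vanishes at $0$ (being a pointwise extremum of the increasing functions $g_{\delta,k}$), inverting yields $t/n\ge g_\delta^{-1}(H(\delta)-\epsilon)$, which is \cref{eq:finalUB}. The remaining loose ends are routine bookkeeping: $\alpha_k t$ need not be an integer (absorbed into $o(1)$ corrections as $n\to\infty$, hence the ``asymptotically in $n$'' in the statement), and one should check that $H(\delta)-\epsilon$ lies in the range of $g_\delta$ so that the inverse is well defined.
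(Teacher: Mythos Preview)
Your proposal is essentially identical to the paper's own argument: the same decomposition of $M$ by row weight, the same column-support Madiman--Tetali cover with fractional weights $1/k$, the termwise application of \cref{thm:ub_ent_Ys}, the empty-column bookkeeping of \cref{acrobatics_1}, Jensen via \cref{lem:JensenApplied}, and finally Fano plus monotone inversion. The step you single out as the main obstacle --- passing from \cref{Entropy_UB1} to \cref{Entropy_UB2} --- is handled in the paper exactly along the lines you anticipate: the objective $\sum_{k\ge1} g_{\delta,k}(\alpha_k T)$ is separable with a diagonal, non-positive Hessian in the $\alpha_k$'s (by \cref{convexity_fck} and \cref{gk1}), so its maximum over the simplex $\sum_k\alpha_k=1$ is attained at a vertex, yielding $\max_{k\ge1} g_{\delta,k}(T)$.
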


The bound in \cref{thm:general_nagt} intersects with $t/n=1$ at $\delta \approx 0.3471$. 

{\bf Remark:} Note that although we have stated the results in this paper for the PGT problem, most arguments in the paper go through for the corresponding CGT problem as well. The only problem arises in the proof of \cref{lem:probY_S} since when $\abs{S}$ is not constant (w.r.t. $n$) $\Prob{\bfY_S = {\bf0}} \ne (1-\delta)^{\abs{[\cR(S)]}}$. However we believe that with some effort and appropriate approximations, our techniques should also go through for CGT.

\section{Discussion and Comparison}
\label{comparison}

In this section we compare the results in \cref{thm:general_nagt} with other achievability and impossibility results in the literature. First, to show an adaptivity gap, we consider a simple adaptive algorithm for the GT problem presented in \cite{C} and analyze the expected number of tests required. The algorithm is defined in \cref{sec:FKW_algo}. The expected number of tests performed is
\begin{equation}\label{adaptive_tests}
	 n \min \left \{1,\frac{1}2 \br*{3- (1-\delta) - (1-\delta)^2} \right \}.
\end{equation}

The graph in \cref{fig:plot} plots the lower bound in \cref{thm:general_nagt}, the expected number of tests in~\cref{adaptive_tests}, the quantization bound in \cref{thm:quantization_Gap}, and the entropy counting bound, ~\cref{counting_bound} for vanishing error i.e. $\epsilon = o(1)$. The solid circle markers in the plot represent the bound in~\cref{eq:finalUB} for $\delta$ such that $\frac{\log(1/2)}{\log\br{1-\delta}} \in \N$. From \cref{fig:plot}, there exists a non-vanishing gap between the lower bound in \cref{thm:general_nagt} and the counting bound. The quantization bound in \cref{thm:quantization_Gap} also improves over the counting bound for a significant region of $\delta$. As claimed earlier, we can also see an adaptivity gap in \cref{fig:plot} represented by the shaded region.

Even when the results in \cref{thm:general_nagt} are plotted for $\epsilon = o(1)$, we can see from \cref{eq:finalUB} and \cref{fig:plot} that there would exist a non-vanishing gap between \cref{eq:finalUB} and the counting bound for small values of $\epsilon$ as well. For $\epsilon>\delta$, it would be possible to ignore certain objects altogether during tests, and hence a smaller number of tests could be possible. 

The number of objects in each test in the GT matrix is constrained to be an integer. This gives a discrete nature to the bound in \cref{eq:finalUB}. This is evident from the piecewise nature of plot for the lower bound in~\cref{eq:finalUB}.


\section{Future Work / Implications}\label{sec:fut_work}

In this work we use the weak form of the Madiman-Tetali inequalities in \cite{madimanTetali09} to upper bound the joint entropy of the test $\bfY_{[t]}$. Since the weak form of the inequalities ignores the gains the conditional form of the entropy function provides, we suspect that there is a lot more to be gained by exploiting the strong form in \cref{MT_conditional_form}. Motivated by the results in this work, we conjecture that for {\em any} constant $\delta > 0$, $n - o(n)$ non-adaptive tests are {\em necessary} to ensure vanishing error.

From the the plots in \cref{fig:plot} and \cref{thm:general_nagt} we see that the joint entropy of the tests is minimized for row weight $k_0(\delta) = \frac{\log(1/2)}{\log(1-\delta)}$. As $\delta$ decreases (and $k_0(\delta)$ increases) the improvement in the first term ($H(\bfY_S | X_i)$) in \cref{acrobatics} reduces. For \cref{MT_conditional_form}, the strong form of the Madiman-Tetali inequalities, this term becomes
\begin{equation}\label{term1_MT}
H(\bfY_S | X_i, \bfY_{S_-}).
\end{equation}
Recall the definition of $S_-$ from \cref{MT_conditional_form}.
Intuitively, as $k_0(\delta)$ increases, the average mutual information between tests $\bfY_S$ and $\bfY_{S_-}$ increases. Thus, for the conditional Madiman-Tetali form, the term in \cref{term1_MT} may be a lot smaller for small $\delta$. Hence we believe that the bound in \cref{thm:general_nagt} could potentially be improved significantly by using the conditional form of the Madiman-Tetali inequalities. However, the analytical approximations involved in using these techniques are also non-trivial.

Another way to see that the bound in \cref{thm:general_nagt} is loose is by changing the hypergraph $\cC$ in \cref{MT}. Instead of taking the support of a single column of $M$ as hyperedges in $\cC$ in \cref{MT}, we could use the union of support of $j$ columns, for $j>1$ i.e. $\cC = \displaystyle \left \{ \bigcup_{i\in A} S_i \right \}_{A\in \binom{[n]}{j} }$. For large $k_0(\delta)$ we can see that a large number of tests $\set{\bfY_S}_{S\in \cC}$ corresponding to the hyperedges $S\in \cC$ will have more than one object in common. Therefore, we believe there is still room for improvement even  just employing  the weak degree form of the Madiman-Tetali inequalities.

One more potentially promising direction worth exploring is to consider the rows or columns of the NAGT matrix as codewords of a binary code, use the {\it combinatorial} Delsarte inequalities~\cite{delsarte1973algebraic} that provide non-trivial bounds on the distance spectrum of codes to appropriately ``tighten" the {\it information-theoretic} Shannon-type inequalities (specifically the Madiman-Tetali inequalities) in this work. We are motivated by the fact that such an optimization approach has had significant success in providing the essentially tightest known upper bounds on the sizes of binary error-correcting codes~\cite{mceliece1977new} -- while we freely admit that it is unclear to us what such a fusion of combinatorial and information-theoretic techniques might look like concretely, nonetheless, the prospect is intriguing.

Finally we believe that our technique of lower bounding the number of tests via the Madiman-Tetali inequalities may have wide applicability in similar sparse recovery problems and other variants of group testing, such as threshold group testing~\cite{damaschke2006threshold}, the pooled-data problem~\cite{wang2016data}, and potentially even long-standing open problems pertaining to threshold secret-sharing schemes~\cite{beimel2011secret}.




\section{Acknowledgements} The authors would also like to thank Oliver Johnson, Matthew Aldridge, and Jonathan Scarlett for enlightening discussions that significantly improved this work. In particular we would like to acknowledge Oliver Johnson for drawing our attention to~\cite{ungar1960cutoff, riccio2000sharper}, to Matthew Aldridge for the observation that our lower bounds imply an adaptivity gap, and to all three for helpful discussions regarding~\cite{wadayama2017nonadaptive}.

This work was partially funded by a grant from the University Grants  Committee of the Hong Kong Special Administrative Region (Project No.\  AoE/E-02/08), RGC GRF grants 14208315 and 14313116 and NSF awards CCF 1642550 and CCF 1618512.

\appendix

\section{Appendix}

\subsection{Proof of the remainder of \texorpdfstring{Lemma~\ref{lem:probY_S}}{Lemma} }\label{sec:proof_lemma_probY_s}

$$	\Prob{\bfY_{S}= \bf{1}} \geq \Prob{\bfY^{\ast}_{S}= \bf{1}}	$$
{\pf

Let $[\cR(\emptyset)] \defeq \emptyset$ and let $\rmE(R)$ denote the event when the test $Y(R)$ is negative, and $\rmE_l$ denote the event $\rmE(R_l)$. Let $\overline{\rmE}$ denote the complement of the event $\rmE$. Let $\rmE_{S} \defeq \bigcup_{l\in S} \rmE_l$. 

\step{1}{
Using inclusion-exclusion \cite[Section 2.1]{stanley2012numerative}, we have,
\begin{align}\label{eq:inc_exc}
	\Prob{\bfY_S = \bf{1}} &= \Prob{\bigcap_{l\in S} \overline{\rmE_l}} = 1 - \Prob{\rmE_S} \nonumber \\
		&= 1 - \sum_{U \subseteq {S} : U\ne \emptyset}  \Prob{\bigcap_{l\in U} \rmE_l} (-1)^{\abs{U}-1}\nonumber \\
		&= \sum_{U \subseteq S}  \Prob{\bigcap_{l\in U} \rmE_l}  (-1)^{\abs{U}} 
\end{align}
}

\step{2}{
Let $\rmE(R)$ denote the event when the test $Y(R)$ is negative, and $\rmE_l$ denote the event $\rmE(R_l)$. Let $\overline{\rmE}$ denote the complement of the event $\rmE$. Let $\rmE_{S} \defeq \bigcup_{l\in S} \rmE_l$. Then, for any $a \in [s]$,
\begin{equation}\label{eq:sublemma}
	\Prob{Y_1 = 0, \bfY_{[2,a]} \ne {\bf1},\bfY_{[a+1,s]} = {\bf1} } = \sum_{\substack{ U \subseteq [2,s] : \\ U\cap [2,a] \ne \emptyset}} \Prob{\bigcap_{l\in U}\rmE_l \cap \rmE_1} (-1)^{\abs{U}+1} 
\end{equation}

\begin{proof}
Using \stepref{1}, we have,
\begin{align}\label{eq:inc_exc_1}
	&\Prob{{\rmE} \cap \overline{\rmE_{[{a+1},s]}} } \nonumber\\
	&= \Prob{{\rmE}} - \Prob{{\rmE} \cap \br{\rmE_{[{a+1},s]} } } \nonumber\\
	&= \sum_{V \subseteq [a+1,s] }  \Prob{\rmE \cap \bigcap_{l\in V} \rmE_l} (-1)^{\abs{V}}
\end{align}

Again using \stepref{1} we have,
\begin{align}
	&\Prob{\bfY_{[a]}\ne {\bf1}, \bfY_{[a+1,s]} = {\bf1} } = \Prob{{\rmE_{[a]}} \cap \overline{\rmE_{[{a+1},s]}} } \nonumber\\
	&= \sum_{U \subseteq [a]: U\ne \emptyset} (-1)^{\abs{U}-1} \Prob{{\bigcap_{l\in U}\rmE_{l}} \cap \overline{\rmE_{[{a+1},s]}} } \nonumber\\
	&= \sum_{\substack{U \subseteq [a] : \\U\ne \emptyset}} \sum_{V \subseteq [a+1,s] }  \Prob{\bigcap_{l\in U}\rmE_{l} \cap \bigcap_{l\in V} \rmE_l} (-1)^{\abs{U\cup V}-1} \label{eq:inc_exc_2_1}\\
	&= \sum_{\substack{U \subseteq [s] :\\ U\cap [a]\ne \emptyset}} \Prob{\bigcap_{l\in U}\rmE_{l} } (-1)^{\abs{U}-1} \label{eq:inc_exc_2}
\end{align}
where \eqref{eq:inc_exc_2_1} follows from \eqref{eq:inc_exc_1}. Now \eqref{eq:sublemma} directly follows from \eqref{eq:inc_exc_2}
\end{proof}
}

\step{3}{We have, using \stepref{1} and \stepref{2},
\begin{align}
	& \Prob{\bfY_{S}= \bf{1}}  \nonumber\\
	&= \sum_{U \subseteq S }  (-1)^{\abs{U}} \Prob{ \bigcap_{i\in U }\rmE_l } \nonumber\\\
		&=  \sum_{\substack{ U \subseteq S :  \\  1\notin U \mbox{ or }\\ [2,a]\cap U = \emptyset} }  (-1)^{\abs{U}} \Prob{ \bigcap_{l\in U}\rmE_l } +   \underbrace{\sum_{\substack{ U \subseteq S\setminus \set{1} : \\ [2,a]\cap U \ne \emptyset}} (-1)^{\abs{U}+1} \Prob{ \bigcap_{l\in U}\rmE_l \cap \rmE_1} }_{ = \Prob{Y_1 = 0, \bfY_{[2,a]} \ne {\bf1},\bfY_{[a+1,s]} = {\bf1} }}  \nonumber\\\
		&\geq  \sum_{\substack{ U \subseteq S :  \\  1\notin U \mbox{ or }\\ [2,a]\cap U = \emptyset} }  (-1)^{\abs{U}} \Prob{ \bigcap_{l\in U}\rmE_l } + (1-\delta)  \sum_{\substack{ U \subseteq S\setminus \set{1} : \\ [2,a]\cap U \ne \emptyset}} (-1)^{\abs{U}+1} \Prob{ \bigcap_{l\in U}\rmE_l \cap \rmE_1}   \nonumber\\
		&=  \sum_{\substack{ U \subseteq S :  \\  1\notin U \mbox{ or }\\ [2,a]\cap U = \emptyset} }  (-1)^{\abs{U}} (1-\delta)^{\abs{ [\cR(U)] }}  + \sum_{\substack{ U \subseteq S\setminus \set{1} : \\ [2,a]\cap U \ne \emptyset}} (-1)^{\abs{U}+1} (1-\delta)^{\abs{ [\cR(U)] \cap R_1}+1}   \label{probY_S_ineq}\\
		&= \Prob{\bfY^{\ast}_{S}= \bf{1}} \nonumber
\end{align}
where \eqref{probY_S_ineq} follows since $\Prob{\bigcap_{R\in \cR}\rmE(R)} = (1-\delta)^{\abs{[\cR]}}$
}

\qed
}

\subsection{Ungar's\texorpdfstring{~\cite{ungar1960cutoff}}{} Adaptive Algorithm}\label{sec:FKW_algo}

Let $\zeta \defeq 1-\delta$. Below we analyze the expected number of tests required in Algorithm~\ref{alg:adaptive_GT}

\begin{algorithm}
	\KwData{$n$ objects such that each object is defective independently with probability $\delta$}
	\KwResult{the defective set, $D$}
	$D=\emptyset$ \\
	\uIf{$\delta \geq \frac{3-\sqrt5}2$}{
		Test each object individually
	}
	\uElse{
		Partition the $n$ items into $n/2$ disjoint pairs. \\
		\While{there exist untested pairs $\set{i_1, i_2}$}{
			$Test =  X_{i_1} \vee X_{i_1}$\; \label{test1}
			\uIf{$Test=1$}{
				$Test =  X_{i_1}$ \;\label{test2}
			}
			\uElseIf{$Test=1$}{
				$D = D\cup \set{i_1}$\;
				$Test =  X_{i_2}$ \;\label{test3}
			}
			\uElseIf{$Test=1$}{
				$D = D\cup \set{i_2}$;
			}
		}
	}
 \caption{Adaptive Algorithm for Group Testing}\label{alg:adaptive_GT}
\end{algorithm}

Algorithm~\ref{alg:adaptive_GT} conducts tests in lines \ref{test1}, \ref{test2}, \ref{test3} :
\begin{enumerate}
	\item The first test in line \ref{test1} is always performed,
	\item The second test in line \ref{test2} is performed iff $X_{i_1}\vee X_{i_2} = 1$.
	\item The third test in line \ref{test3} is performed iff $X_{i_1} = 1$.
\end{enumerate}
Thus the expected number of tests performed is 
\begin{equation}\label{adaptive_tests_appendix}
	n \min\left \{1,\frac{1}2 \br*{1 + (1-\zeta^2) + (1-\zeta)}\right \} = n \min\left \{1,\frac{1}2 \br*{3 - \zeta - \zeta^2}\right \}
\end{equation}

\bibliographystyle{IEEEtranS}
\bibliography{GTsummer}

\begin{thebibliography}{10}
\providecommand{\url}[1]{#1}
\csname url@samestyle\endcsname
\providecommand{\newblock}{\relax}
\providecommand{\bibinfo}[2]{#2}
\providecommand{\BIBentrySTDinterwordspacing}{\spaceskip=0pt\relax}
\providecommand{\BIBentryALTinterwordstretchfactor}{4}
\providecommand{\BIBentryALTinterwordspacing}{\spaceskip=\fontdimen2\font plus
\BIBentryALTinterwordstretchfactor\fontdimen3\font minus
  \fontdimen4\font\relax}
\providecommand{\BIBforeignlanguage}[2]{{%
\expandafter\ifx\csname l@#1\endcsname\relax
\typeout{** WARNING: IEEEtranS.bst: No hyphenation pattern has been}%
\typeout{** loaded for the language `#1'. Using the pattern for}%
\typeout{** the default language instead.}%
\else
\language=\csname l@#1\endcsname
\fi
#2}}
\providecommand{\BIBdecl}{\relax}
\BIBdecl

\bibitem{aldridge2016improved}
M.~Aldridge, O.~Johnson, and J.~Scarlett, ``{Improved Group Testing Rates with
  Constant Column Weight Designs},'' in \emph{IEEE Int Symp Info}, 2016, pp.
  1381--1385.

\bibitem{beimel2011secret}
A.~Beimel, ``{Secret-Sharing Schemes: A Survey},'' \emph{IWCC}, vol. 6639, pp.
  11--46, 2011.

\bibitem{berger1984random}
T.~Berger, N.~Mehravari, D.~Towsley, and J.~Wolf, ``{Random Multiple-access
  Communication and Group Testing},'' \emph{IEEE T Comm}, vol.~32, no.~7, pp.
  769--779, 1984.

\bibitem{chan2014non}
C.~L. Chan, S.~Jaggi, V.~Saligrama, and S.~Agnihotri, ``{Non-adaptive Group
  Testing: Explicit Bounds and Novel Algorithms},'' \emph{IEEE T Inform
  Theory}, vol.~60, no.~5, pp. 3019--3035, 2014.

\bibitem{damaschke2006threshold}
P.~Damaschke, ``{Threshold Group Testing},'' in \emph{General theory of
  information transfer and combinatorics}.\hskip 1em plus 0.5em minus
  0.4em\relax Springer, 2006, pp. 707--718.

\bibitem{damaschke2010bounds}
P.~Damaschke and A.~S. Muhammad, ``{Bounds for Nonadaptive Group Tests to
  Estimate the Amount of Defectives},'' in \emph{Intl Conf Combinatorial
  Optimization and Appl}.\hskip 1em plus 0.5em minus 0.4em\relax Springer,
  2010, pp. 117--130.

\bibitem{delsarte1973algebraic}
P.~Delsarte, ``{An Algebraic Approach to the Association Achemes of Coding
  Theory},'' \emph{Philips Res. Reports Suppls.}, vol.~10, 1973.

\bibitem{dorfman1943detection}
R.~Dorfman, ``{The Detection of Defective Members of Large Populations},''
  \emph{Annals of Mathematical Statistics}, vol.~14, no.~4, pp. 436--440, 1943.

\bibitem{du2000combinatorial}
D.-Z. Du and F.~K. Hwang, \emph{{Combinatorial Group Testing and its
  Applications}}.\hskip 1em plus 0.5em minus 0.4em\relax World Scientific,
  2000.

\bibitem{d1982bounds}
A.~G. D'yachkov and V.~V. Rykov, ``{Bounds on the Length of Disjunctive
  Codes},'' \emph{Problemy Peredachi Informatsii}, vol.~18, no.~3, pp. 7--13,
  1982.

\bibitem{C}
P.~Fischer, N.~Klasner, and I.~Wegener, ``{On the Cut-off Point for
  Combinatorial Group Testing},'' \emph{Discrete Applied Mathematics}, vol.~91,
  no. 1-3, pp. 83--92, 1999.

\bibitem{Han78}
T.~S. Han, ``{Nonnegative Entropy Measures of Multivariate Symmetric
  Correlations},'' \emph{Information and Control}, vol.~36, no.~2, pp.
  133--156, 1978.

\bibitem{E}
M.~Hu, F.~Hwang, and J.~K. Wang, ``{A Boundary Problem for Group Testing},''
  \emph{SIAM J Algebraic Discrete Methods}, vol.~2, no.~2, pp. 81--87, 1981.

\bibitem{A}
F.~Hwang, ``{A Method for Detecting all Defective Members in a Population by
  Group Testing},'' \emph{American Statistical Association}, vol.~67, no. 339,
  pp. 605--608, 1972.

\bibitem{hwang1987non}
F.~Hwang and V.~S{\'o}s, ``{Non-adaptive Hypergeometric Group Testing},''
  \emph{Studia Sci. Math. Hungar}, vol.~22, pp. 257--263, 1987.

\bibitem{kailath2000linear}
T.~Kailath, A.~H. Sayed, and B.~Hassibi, \emph{{Linear estimation}}.\hskip 1em
  plus 0.5em minus 0.4em\relax Prentice Hall Upper Saddle River, NJ, 2000,
  vol.~1.

\bibitem{macula2004group}
A.~J. Macula and L.~J. Popyack, ``{A group testing method for finding patterns
  in data},'' \emph{Discrete Applied Mathematics}, vol. 144, no.~1, pp.
  149--157, 2004.

\bibitem{madimanTetali09}
M.~Madiman and P.~Tetali, ``{Information Inequalities for Joint Distributions,
  With Interpretations and Applications},'' \emph{IEEE T Inform Theory},
  vol.~56, no.~6, pp. 2699--2713, 2010.

\bibitem{mazumdar2016nonadaptive}
A.~Mazumdar, ``{Nonadaptive Group Testing with Random Set of Defectives},''
  \emph{IEEE T Inform Theory}, vol.~62, no.~12, pp. 7522--7531, 2016.

\bibitem{mceliece1977new}
R.~McEliece, E.~Rodemich, H.~Rumsey, and L.~Welch, ``{New upper bounds on the
  rate of a code via the Delsarte-MacWilliams inequalities},'' \emph{IEEE T
  Inform Theory}, vol.~23, no.~2, pp. 157--166, 1977.

\bibitem{ngo2000survey}
H.~Q. Ngo and D.-Z. Du, ``{A Survey on Combinatorial Group Testing Algorithms
  with Applications to DNA Library Screening},'' \emph{Discrete Math Problems
  with Medical Appl}, vol.~55, pp. 171--182, 2000.

\bibitem{reeves2013approximate}
G.~Reeves and M.~C. Gastpar, ``{Approximate Sparsity Pattern Recovery:
  Information-theoretic Lower Bounds},'' \emph{IEEE T Inform Theory}, vol.~59,
  no.~6, pp. 3451--3465, 2013.

\bibitem{riccio2000sharper}
L.~Riccio and C.~J. Colbourn, ``{Sharper Bounds in Adaptive Group Testing},''
  \emph{Taiwanese Journal of Mathematics}, pp. 669--673, 2000.

\bibitem{stanley2012numerative}
R.~P. Stanley, ``{Enumerative Combinatorics. Vol. 1, Cambridge Studies in
  Advanced Mathematics},'' 2012.

\bibitem{ungar1960cutoff}
P.~Ungar, ``{The Cutoff Point for Group Testing},'' \emph{Communications on
  Pure and Applied Math}, vol.~13, no.~1, pp. 49--54, 1960.

\bibitem{wadayama2017nonadaptive}
T.~Wadayama, ``{Nonadaptive Group Testing Based On Sparse Pooling Graphs},''
  \emph{IEEE T Inform Theory}, vol.~63, no.~3, pp. 1525--1534, 2017.

\bibitem{wang2016data}
I.-H. Wang, S.-L. Huang, K.-Y. Lee, and K.-C. Chen, ``{Data Extraction via
  Histogram and Arithmetic Mean Queries: Fundamental Limits and Algorithms},''
  in \emph{IEEE Int Symp Info}, 2016, pp. 1386--1390.

\bibitem{wolf1985born}
J.~Wolf, ``{Born Again Group Testing: Multiaccess Communications},'' \emph{IEEE
  T Inform Theory}, vol.~31, no.~2, pp. 185--191, 1985.

\bibitem{zhang1997non}
Z.~Zhang and R.~W. Yeung, ``{A non-Shannon-type Conditional Inequality of
  Information Quantities},'' \emph{IEEE T Inform Theory}, vol.~43, no.~6, pp.
  1982--1986, 1997.

\bibitem{zhang1998characterization}
------, ``{On Characterization of Entropy Function via Information
  Inequalities},'' \emph{IEEE T Inform Theory}, vol.~44, no.~4, pp. 1440--1452,
  1998.

\bibitem{zhigljavsky2003probabilistic}
A.~Zhigljavsky, ``{Probabilistic existence theorems in group testing},''
  \emph{J Statistical Planning and Inference}, vol. 115, no.~1, pp. 1--43,
  2003.

\end{thebibliography}
\end{document}